\newtheorem{theorem}{Theorem}
\theoremstyle{definition}
\newtheorem{definition}{Definition}
\theoremstyle{definition}
\theoremstyle{definition}
\newtheorem{lemma}{Lemma}
\theoremstyle{definition}
\newtheorem{problem}{Problem}
\theoremstyle{definition}
\theoremstyle{definition}
\DeclareMathAlphabet{\mathpzc}{OT1}{pzc}{m}{it}
\algrenewcommand\algorithmicrequire{\textbf{Input:}}
\algrenewcommand\algorithmicensure{\textbf{Output:}}
\newlength{\FigureMargin}
\newlength{\FigureMarginBottom}
\newlength{\SubFigureMargin}
\newlength{\SubFigureToCaption}
\newlength{\TableMargin}
\newlength{\TableWithFootnoteMargin}
\newlength{\TableToCaption}
\newlength{\cmidruleWidth}
\newlength{\TheoremMargin}
\newlength{\TheoremMarginBottom}
\newlength{\TheoremWithEquationMarginBottom}
\newlength{\EnumerateMargin}
\title{\LARGE \bf
A Semi-decentralized and Variational-Equilibrium-Based Trajectory Planner for Connected and Autonomous Vehicles
}
\author{Zhengqin Liu$^{1}$, Jinlong Lei$^{2}$, and Peng Yi$^{2}$ 
\thanks{*This research is sponsored by the National Natural Science Foundation of China under Grant 72271187, 62373283, and 62088101, and partially supported by Shanghai Municipal Science and Technology Major Project No2021SHZDZX0100, and the Fundamental Research Funds for the Central Universities, China.}
\thanks{$^{1}$Zhengqin Liu is with the Department of Control Science and Engineering,
Tongji University, Shanghai 201804, China
        {\tt\small (2230709@tongji.edu.cn)}}%
\thanks{$^{2}$ Jinlong Lei and Peng Yi are with the Department of Control Science and Engineering, Tongii University, Shanghai 201804, China; The Shanghai Research Institute for Intelligent Autonomous Systems, Shanghai, 201804, China; Shanghai Institute of Intelligent Science and Technology, Tongji University
       \ {\tt\small (leijinlong,yipeng@tongji.edu.cn)}}%
}
\begin{document}

\maketitle
\thispagestyle{empty}
\pagestyle{empty}

\begin{abstract}
This paper designs a novel trajectory planning approach to resolve the computational efficiency and safety problems in uncoordinated methods by exploiting vehicle-to-everything (V2X) technology. The trajectory planning for connected and autonomous vehicles (CAVs) is formulated as a game with coupled safety constraints. We then define interaction-fair trajectories and prove that they correspond to the variational equilibrium (VE) of this game. We propose a semi-decentralized planner for the vehicles to seek VE-based fair trajectories, which can significantly improve computational efficiency through parallel computing among CAVs and enhance the safety of planned trajectories by ensuring equilibrium concordance among CAVs. Finally, experimental results show the advantages of the approach, including fast computation speed, high scalability, equilibrium concordance, and safety.
\end{abstract}

\section{Introduction}
\label{sec:introduction}
Recently, the interaction mode between autonomous vehicles (AV), i.e., mutual collision avoidance behavior, has received much research attention. Game theory provides a promising approach for modeling the decision-making and interactions of AVs, in which the equilibrium concept can describe each vehicle's optimal action during multi-vehicle interactions. Various equilibrium concepts have been applied to study the trajectory planning of AVs, including Nash equilibrium \cite{meng2016dynamic}, generalized Nash equilibrium (GNE) \cite{yuan2020cooperative}, and Stackelberg equilibrium \cite{zhang2019game}.
However, these studies require the unrealistic assumption that other vehicles' objective functions are known by the ego vehicle. Besides, each vehicle separately solves equilibrium, which leads to redundant computation and discordant equilibrium. The latter results in misjudgments about the actions of other vehicles and damages the trajectories' safety.

With V2X, CAVs can share intentions and decisions, which provides the possibility for the distributed solution of trajectory planning games, thus avoiding redundant computation and ensuring equilibrium consensus among CAVs. Research using distributed frameworks in the games of CAVs seldom focuses on trajectory-level planning. For example, \cite{chen2020joint} uses a cooperative evolutionary game and a distributed algorithm to jointly optimize vehicle routing and traffic signal timing but ignores constraints and dynamics at the trajectory level. Besides, \cite{guo2022lane} employs a decomposed non-cooperative game to decide acceleration/deceleration and steering in lane changing, but it can merely provide CAVs with a set of discrete choices for movement. In addition, \cite{zhang2019algorithm} proposes a distributed trajectory planner for a two-vehicle cooperative game, but does not consider coupled collision avoidance constraints for safety.

To overcome these problems, this paper focuses on traffic scenarios of CAVs that are equipped with roadside units (RSUs). RSUs can communicate with CAVs and coordinate them to avoid collisions. We model CAV's trajectory planning problem as a GNE problem with coupled collision avoidance constraints and propose a semi-decentralized and variational-equilibrium-based planner (SVEP) for CAVs. In the proposed method, each CAV only solves its own trajectory, which ensures real-time performance and scalability. A consensus mechanism is designed by the RSU for Lagrange multipliers to ensure that CAVs converge to the same variational equilibrium (VE), thus enhancing safety. To our knowledge, this is the first study that solves the trajectory planning game of CAVs with coupled constraints in a semi-distributed manner.

Our contributions are as follows.
\begin{enumerate}[wide]
  \vspace{\EnumerateMargin}
  \item We model the trajectory planning problem of CAVs as a game with coupled constraints to enhance the safety of trajectories and respect the autonomy of each vehicle.
  \item The concept of interaction fairness is proposed for the trajectories based on our model. Through sensitivity analysis, we show that an interaction-fair GNE is in fact a VE.
  \item We propose a semi-decentralized and VE-based planner for CAVs, which has the features of fast computation, high scalability, equilibrium concordance, and safety.
  \vspace{\EnumerateMargin}
\end{enumerate}

Notation: $\boldsymbol{0}$ denotes a vector with all elements being $0$. The operator $\operatorname{vec}(\cdot)$ is defined as the concatenation of column vectors or scalars $a_1,\dots,a_l$, i.e., $\operatorname{vec}(a_1,\dots,a_l)=(a_1^T,\dots,a_l^T)^T$.  For a vector-valued function $f(x)$, $J_f(x)$ represents the Jacobian matrix of $f(x)$. For a multivariate function $g(x)$, $\nabla_x g(x)$ denotes the gradient of $g(x)$. For a vector $x$ and a matrix $A$, $\|x\|^2_A=x^TAx$. $A\odot B$ represents the Hadamard product of matrices $A$ and $B$. $\mathcal{N}_{C}$ denotes the normal cone of set $C$, $\operatorname{int} C$ represents the interior point set of set $C$, and $|C|$ denotes the cardinal number of set $C$.

\section{Problem Formulation}
\label{sec:problem_formulation}
\subsection{Traffic Scenario and Trajectory Planning Problem}
\label{subsec:scenario_and_problem}
We consider a two-way traffic scenario equipped with roadside units without traffic signals, where the penetration rate of CAVs is $\qty{100}{\percent}$. The set of CAVs is denoted as $\mathcal{N}=\{1,2,\dots,n\}$. $T$ denotes the discrete prediction horizon, with time steps $k=1,\dots,T$. Denote the state and control vectors of CAV $i$ at time $k$ as $x_i(k)\in \mathbb{R}^{l_x}$ and $u_i(k)\in \mathbb{R}^{l_u}$, respectively. The trajectory to be planned for CAV $i$ is the   sequence $s_i=\operatorname{vec}(s_i(1),s_i(2),\dots,s_i(T))$, where $s_i(k)=\operatorname{vec}(x_i(k),u_i(k))$, but excluding $x_i(1)$ and $u_i(T)$. Fig. \ref{fig:environment_sketch} gives an example of a trajectory planning problem in an intersection scenario.

\begin{figure}[htbp]
              \vspace{\FigureMargin}
              \centering
              \makebox[0.90\linewidth][c]{
              \begin{minipage}{0.57\linewidth}
                \centering
                \subfloat[]{
                  \includegraphics[width=\linewidth]{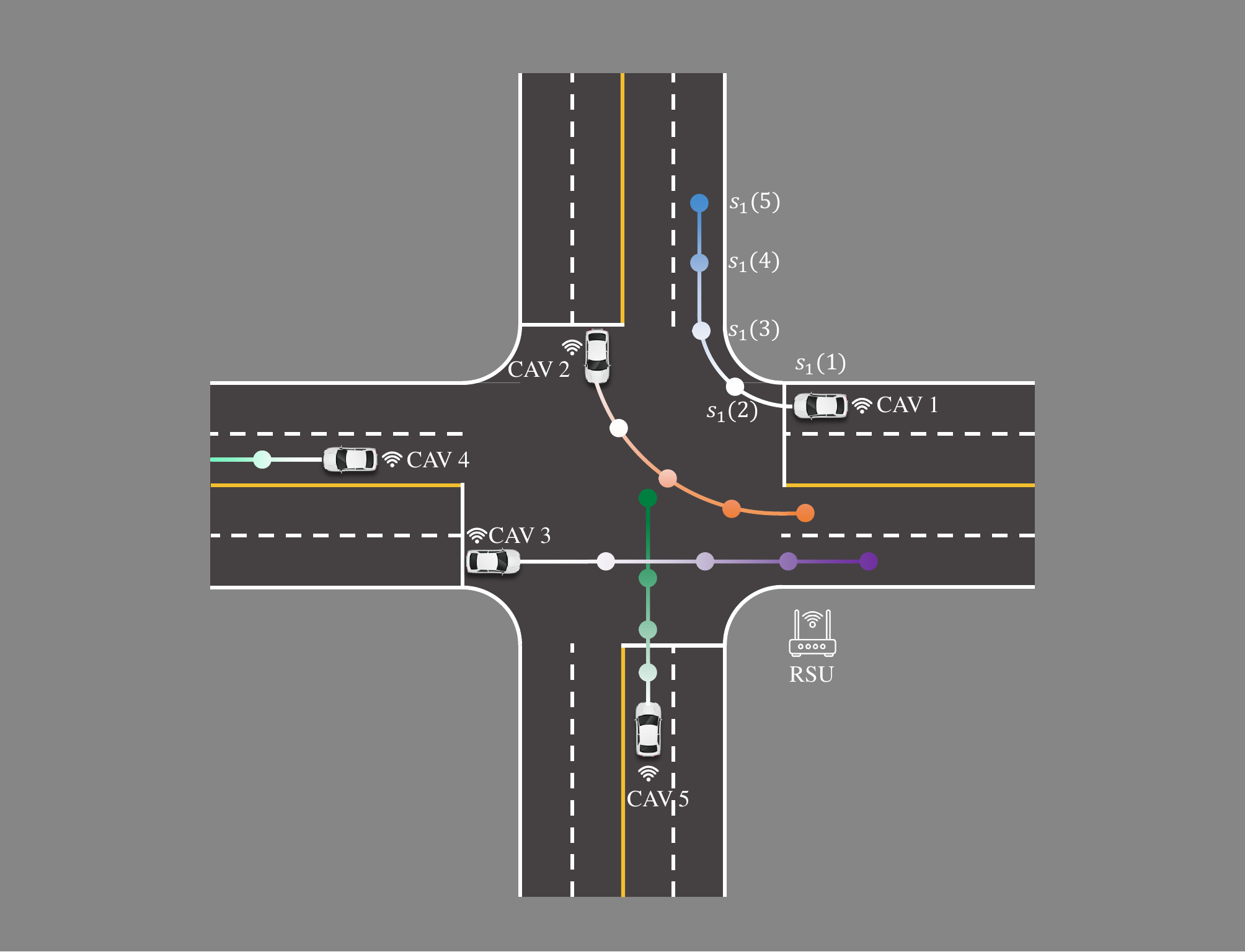}%
                  \label{fig:environment_sketch}
                }
              \end{minipage}
              \begin{minipage}{0.32\linewidth}
                \centering
                \subfloat[]{
                  \includegraphics[width=\linewidth]{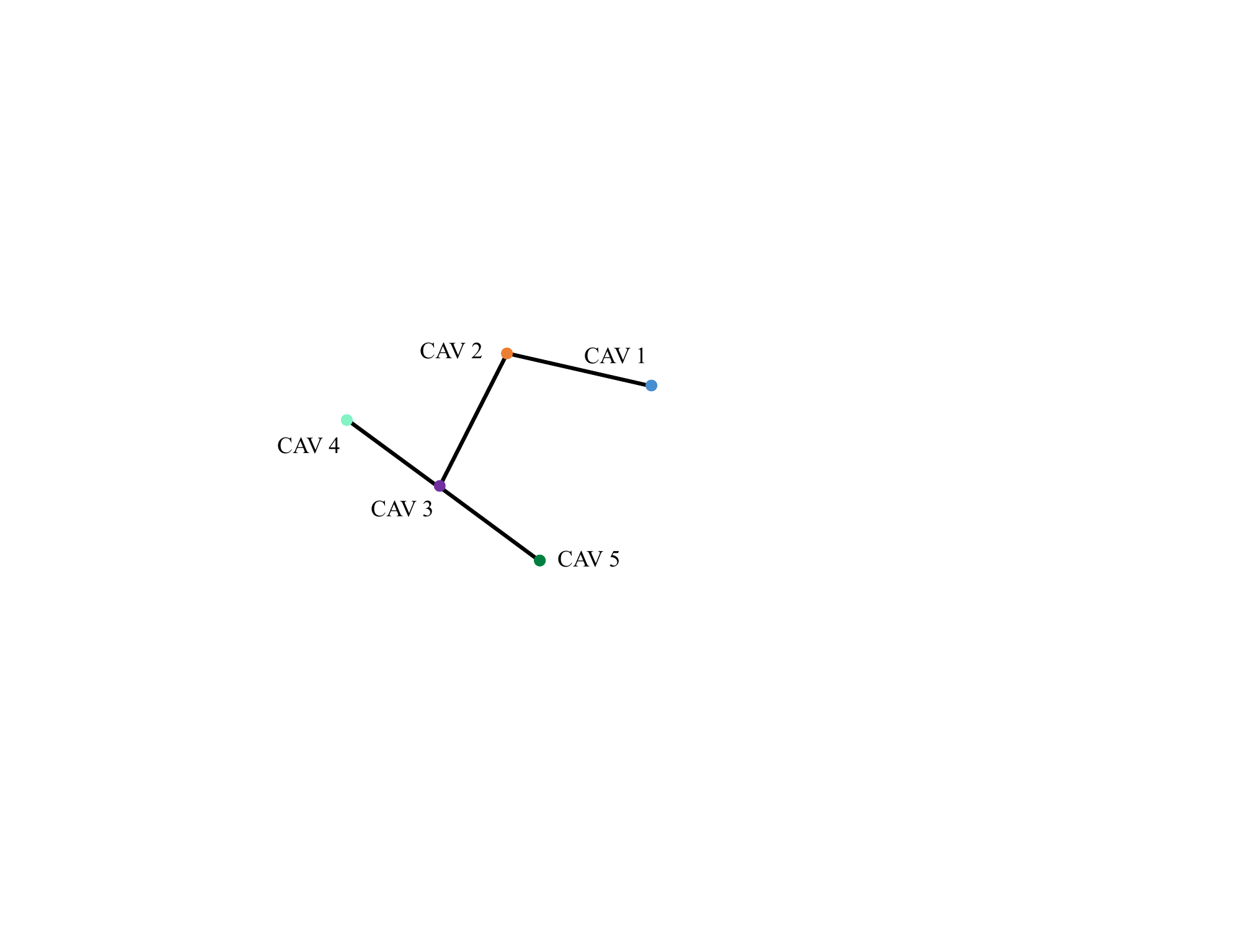}
                  \label{fig:graph_sketch_1}
                }
                \vspace{-0.3cm}
                \newline
                \subfloat[]{
				  \includegraphics[width=\linewidth]{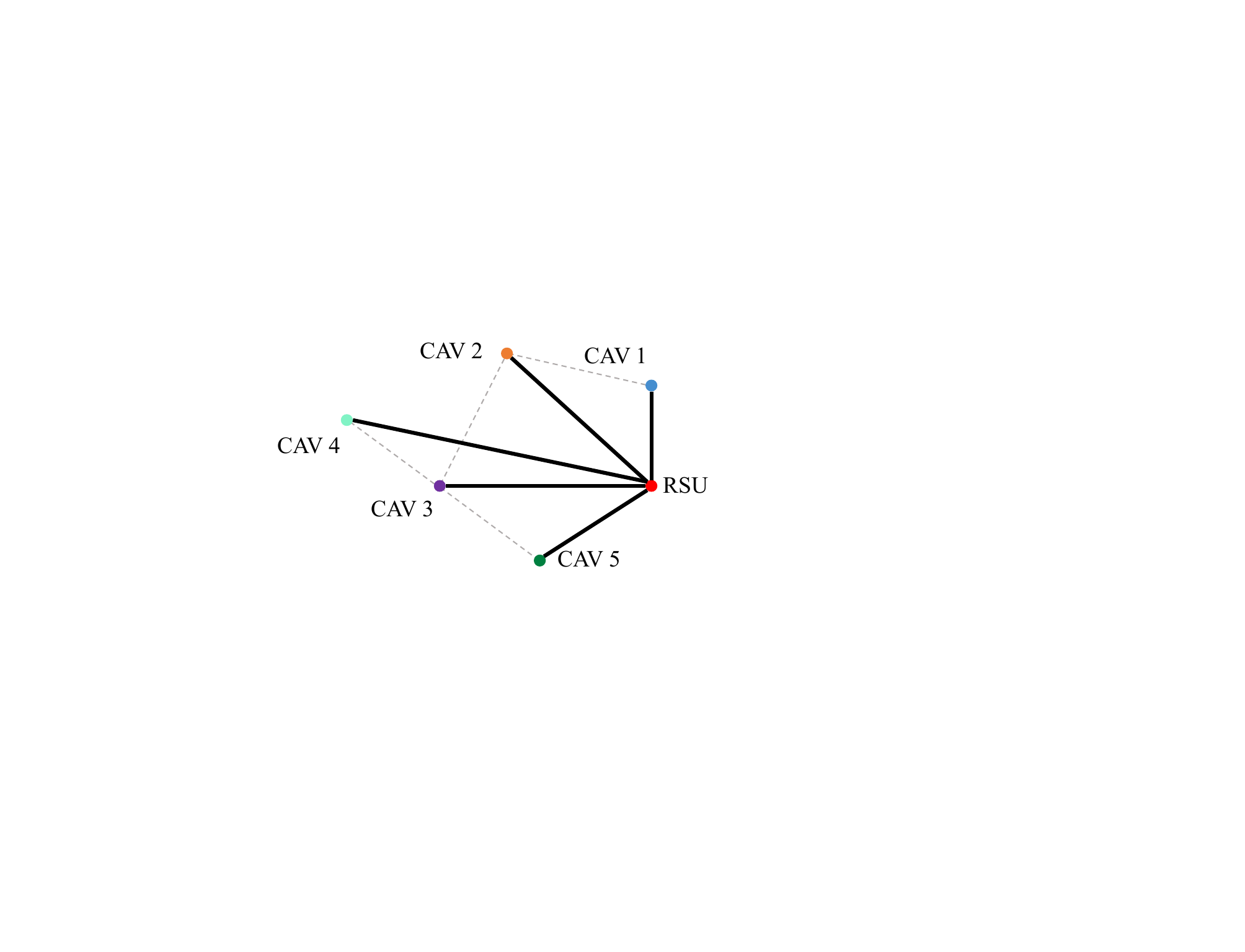}
                  \label{fig:graph_sketch_2}
                }
                \vspace{\SubFigureToCaption}
              \end{minipage}
              }
              \vspace{-0.2cm}
              \caption{The illustration of the problem setting. (a) The traffic scenario and the trajectory planning problem. (b) The interaction graph of CAVs. (c) The communication topology of CAVs and the RSU.}
              \vspace{\FigureMarginBottom}
          \end{figure}

We let the RSU only provide coordination information to CAVs that need to avoid collisions with each other, and CAVs only communicate with the RSU. In the scenario illustrated in Fig. \ref{fig:environment_sketch}, the proximity to other CAVs is used as the criterion for determining the interaction, and the interaction relationship can be represented as an undirected graph $\mathcal{G}=\{\mathcal{N},\mathcal{E}\}$, where $\mathcal{E}$ is the set of edges, as shown in Fig. \ref{fig:graph_sketch_1}. $(i,j)\in\mathcal{E}$ if and only if CAVs $i$ and $j$ have an interaction relationship, and $(i,i)\notin\mathcal{E}$. The neighbor set of CAV $i$ is defined as $\mathcal{N}_i=\{j\in\mathcal{N}|(i,j)\in\mathcal{E}\}$, where $i\notin\mathcal{N}_i$. For any CAV $i \in \mathcal{N}$, its information sharing with any of its neighbors $j\in\mathcal{N}_i$ is through the RSU. Hence, the communication topology is shown in Fig. \ref{fig:graph_sketch_2}.

Let $s_{\mathcal{N}_i}=\operatorname{vec}(s_j),\forall j\in \mathcal{N}_i$ denote the strategy profile of the neighbors of CAV $i \in \mathcal{N}$. The interactive trajectory planning problem  is defined as Problem \ref{de:abstract_game}. Due to the existence of coupled constraints, the problem is a GNE problem (GNEP). We assume each CAV has a feasible trajectory.

\begin{problem}
\vspace{\TheoremMargin}
The trajectory planning GNEP is denoted by a tuple $G=(\mathcal{N},\mathcal{G},(S_i)_{i\in  \mathcal{N}},(X_i)_{i\in  \mathcal{N}},(J_i)_{i\in  \mathcal{N}})$, where $ \mathcal{N}$ is the set of CAVs, $\mathcal{G}$ is the interaction graph of CAVs, $S_i$ and $X_i$ are respectively the private and coupled  strategy constraints for CAV $i$, and $J_i$ is the objective function of CAV $i \in  \mathcal{N}$. Each CAV $i$ obtains the information of its neighbors $j\in \mathcal{N}_i$ and coordination information through the RSU, and solves the minimization problem
\begin{equation}
\min\limits_{s_i\in S_i\cap X_i(s_{\mathcal{N}_i})} J_i(s_i).                      
\label{eq:abstract_game}
\end{equation}
\label{de:abstract_game}
\vspace{\TheoremWithEquationMarginBottom}
\end{problem}

\subsection{Detailed Modeling of Trajectory Planning Problem}
\label{subsec:detailed_model}
This part will define constraints and the objective function so as to give a detailed model of Problem \ref{de:abstract_game}.
          
\begin{enumerate}[wide]
    \item {\bf Dynamics constraints}: We use the kinematic bicycle model \cite{matute2019experimental}. The state of CAV $i$ at time step $k$, denoted by  $x_i(k)=\operatorname{vec}(p_{x,i}(k),p_{y,i}(k),v_{i}(k),\psi_{i}(k)) \in \mathbb{R}^4$, consists of the $x$ and $y$ coordinates, velocity, and yaw angle, while the control $u_i(k)=\operatorname{vec}(a_i(k),\delta_i(k))\in\mathbb{R}^2$ consists of acceleration and front wheel steering angle. The continuous-time dynamics equations are $\dot{p}_{x,i}=v_i\cos\psi_i, \allowbreak \dot{p}_{y,i}=v_i\sin\psi_i, \allowbreak \dot{v}_i=a_i, \allowbreak \dot{\psi}_i=\frac{v_i\tan\delta_i}{L}$, where $L$ is the longitudinal length of a CAV. The discrete-time form of dynamics equations is $\tilde{f}_i(s_i)=\boldsymbol{0}$. We further simplify it through linearization to obtain the dynamics constraints $f_i(s_i)=\tilde{f}_i(\bar{s}_i)+J_{\tilde{f}_i}(\bar{s}_i)\cdot (s_i-\bar{s}_i)$, where $\bar{s}_i$ is the nominal value given by initialization.

    \begin{figure}[htbp]
    \vspace{\SubFigureMargin}
    \centering
    \subfloat[]{
        \includegraphics[width = 0.40\linewidth]{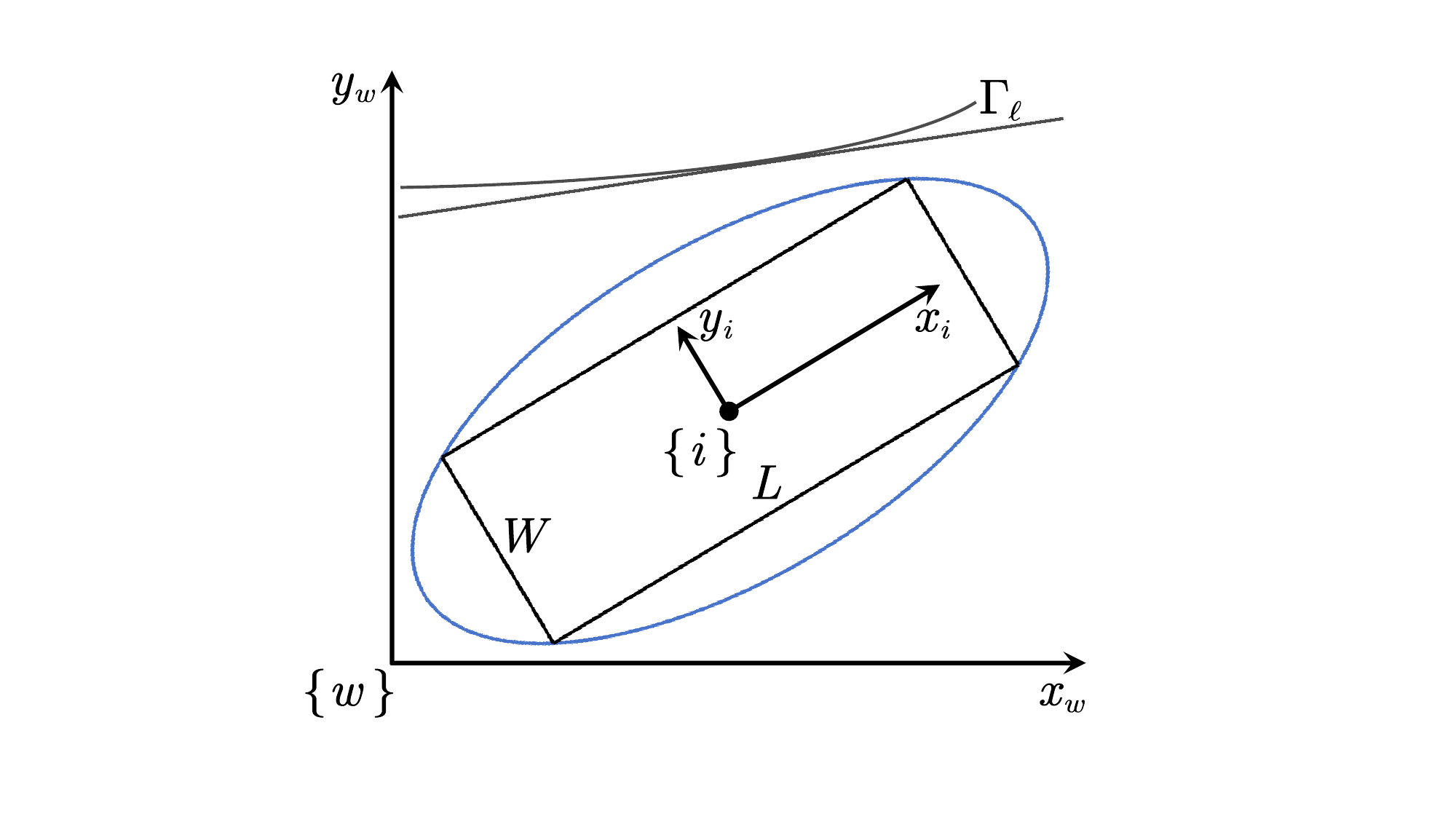}
        \label{fig:lane_constraint}
    }
    \subfloat[]{
        \includegraphics[width = 0.40\linewidth]{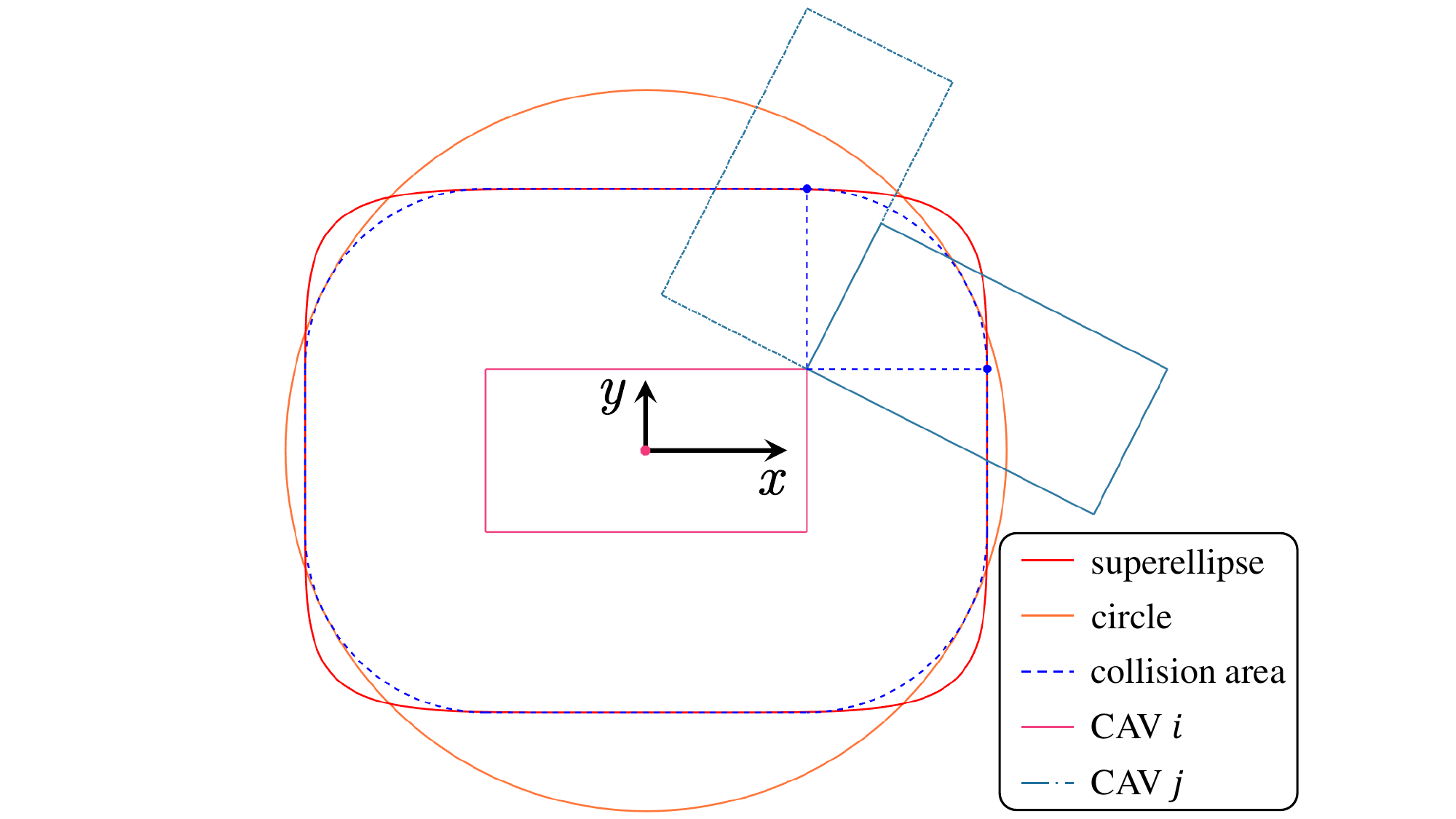}
        \label{fig:collision_avoidance_constraint}
    }
    \vspace{\SubFigureToCaption}
    \caption{Illustration of constraints. (a) Lane constraints. (b) Collision avoidance constraints.}
    \vspace{\FigureMarginBottom}
    \label{fig:constraint}
\end{figure}
       
    \item {\bf Box constraints}: CAV $i$ follows $v_i(k) \allowbreak \in [v_{i,\min}, \allowbreak v_{i,\max}], \allowbreak a_i(k) \allowbreak \in [a_{i,\min}, \allowbreak a_{i,\max}], \allowbreak \delta_i(k) \allowbreak \in [\delta_{i,\min}, \allowbreak \delta_{i,\max}]$.

    \item {\bf Lane constraints}: We use the intersection between the circumscribed ellipse of the vehicle's rectangular plan view and the lane boundary line $\Gamma_\ell$ to determine whether the vehicle is off-lane, as shown in Fig. \ref{fig:lane_constraint}. The equation of the circumscribed ellipse in the vehicle coordinate frame $i$ shown in Fig. \ref{fig:lane_constraint} is given by $\frac{\check{x}^2}{U^2}+\frac{\check{y}^2}{V^2}=1$, and the linearized $\Gamma_\ell$ is $d\check{x}+e\check{y}+f=0$. Substituting the line equation into the ellipse equation yields $\left(d^{2} U^{2}+e^{2} V^{2}\right) \check{x}^{2}+2 d f U^{2} \check{x}+U^{2}\left(f^{2}-V^{2} e^{2}\right)=0$. From the discriminant $\Delta\leq 0$, we obtain the lane boundary constraint $c_{i,\ell}(s_i(k))=d^{2} U^{2}+e^{2} V^{2}-f^{2}\leq 0$. We also restrict the position of CAV $i$ to the inner side of the lane boundary line, i.e., $d_{i,\ell}(s_{i}(k))=ap_{x,i}(k)+bp_{y,i}(k)+c\leq 0$. The dynamic constraints, box constraints, and lane constraints jointly determine $S_i$.

    \item {\bf Collision avoidance constraints}: Let the expression of the superellipse be $\frac{x^6}{(L / 2 + D / 2) ^ 6}  + \frac{y ^ 6}{(W / 2 + D/ 2) ^ 6} = 1$, where $L$, $W$, and $D$ respectively represent the length, width, and diagonal length of the rectangle. This shape can cover the collision area more accurately than the commonly used circle, as shown in Fig. \ref{fig:collision_avoidance_constraint}. At time step $k$, the coordinates of CAV $j$ in the frame $i$ are denoted as $(\check{p}_{x,j}(k),\check{p}_{y,j}(k))$, and the collision avoidance constraint between CAVs $i$ and $j$ is denoted as $h_{i,j}(s_i(k),s_j(k))\leq 0$, where $h_{i,j}(s_i(k),s_j(k))=1-\frac{(\check{p}_{x,j}(k))^6}{(\frac{L}{2} + \frac{D}{2}) ^ 6}  - \frac{(\check{p}_{y,j}(k)) ^ 6}{(\frac{W}{2} + \frac{D}{2}) ^ 6}$. We make the collision avoidance constraints of CAVs $i$ and $j$ take the same form. The linearized collision avoidance constraint between CAV $i$ and all its neighbors is denoted by $h_i(s_i,s_{\mathcal{N}_i})$, and we define $X_{i}\left(s_{\mathcal{N}_i}\right) =\{s_i | h_i(s_i,s_{\mathcal{N}_i})\leq \boldsymbol{0}\}$.

    \item {\bf The objective function}: Given a reference trajectory $x_{ref,i}(k)$, the objective of CAV $i$ is to minimize the deviation of $x_{i}(k)$ from $x_{ref,i}(k)$, as well as to minimize $u_i(k)$. The objective function of CAV $i$ is
          \begin{equation}
              \begin{aligned} J_i(s_i)= & \frac{1}{2}\sum_{k=2}^{T-1} \|x_{i}(k)-x_{ref,i}(k)\|^2_{Q_i}+\frac{1}{2}\sum_{k=1}^{T-1}\|u_{i}(k)\|^2_{R_i} \\
                          & +\frac{1}{2}\|x_i(T)-x_{ref,i}(T)\|^2_{Q_{i,f}},
              \end{aligned}
          \end{equation}
          where $Q_i$, $Q_{i,f}$, and $R_i$ are weight matrices, all being positive definite diagonal matrices.
\end{enumerate}

\section{Solution Concept and Its Fairness}
\label{sec:solution_concept}
The solution to Problem \ref{de:abstract_game} is a GNE $s^{*}=\operatorname{vec}\left(s_{1}^{*}, \dots, s_{n}^{*}\right)$, which satisfies $J_i(s_{i}^{*}) \leq J_i(s_{i}), \forall s_{i} \in S_i \cap X_{i}\left(s_{\mathcal{N}_i}^{*}\right), \forall i \in \mathcal{N}$. Its KKT conditions are given by the following Lemma \ref{le:kkt_set} \cite[Theorem 4.6]{facchinei2010generalized}.

\begin{lemma}
    \vspace{\TheoremMargin}
    If $s^*$ is the GNE of Problem \ref{de:abstract_game}, then $\forall i\in \mathcal{N}$, there exists a Lagrange multiplier vector $\lambda_i^*$, such that
    \begin{equation}
        \begin{array}{c}
        \boldsymbol{0}\in \nabla_{s_i}J_i(s_i^*)+ \nabla_{s_i} h_i(s_i^*,s_{\mathcal{N}_i}^{*})\cdot\lambda_i^*+\mathcal{N}_{S_i}(s_i^*), \\
        h_i(s_i^*,s_{\mathcal{N}_i}^{*})\leq \boldsymbol{0},\ \;
        \lambda_i^*\geq \boldsymbol{0},\ \;
        \lambda_i^* \odot h_i(s_i^*,s_{\mathcal{N}_i}^{*})=\boldsymbol{0},
        \end{array}
        \label{eq:kkt}
    \end{equation}
    where $\mathcal{N}_{S_i}$ represents the normal cone of the set $S_i$, and the gradient operator $\nabla$ follows the denominator layout.
    \label{le:kkt_set}
    \vspace{\TheoremMarginBottom}
\end{lemma}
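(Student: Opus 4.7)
The plan is to reduce the GNE condition to a family of single-player optimization problems and then invoke first-order necessary optimality conditions with an abstract constraint. By the definition of the GNE of Problem \ref{de:abstract_game}, for every fixed CAV $i \in \mathcal{N}$ the component $s_i^*$ is a minimizer of $J_i(\cdot)$ over the set $\{s_i \in S_i : h_i(s_i, s_{\mathcal{N}_i}^*) \leq \boldsymbol{0}\}$, holding the neighbors' trajectories $s_{\mathcal{N}_i}^*$ fixed as parameters. I would split the feasible set into an abstract constraint $s_i \in S_i$ (which, after linearization, bundles the dynamics equalities, the box constraints, and the two lane-related inequalities into a convex polyhedral set) and the coupled collision-avoidance inequality $h_i(s_i, s_{\mathcal{N}_i}^*) \leq \boldsymbol{0}$, to be attached to a Lagrange multiplier $\lambda_i^* \geq \boldsymbol{0}$.

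Applying the generalized KKT theorem for mathematical programs with an abstract set constraint (i.e., the result behind \cite[Theorem 4.6]{facchinei2010generalized}) to each such single-player problem yields the stationarity inclusion
\[
\boldsymbol{0} \in \nabla_{s_i} J_i(s_i^*) + \nabla_{s_i} h_i(s_i^*, s_{\mathcal{N}_i}^*)\cdot \lambda_i^* + \mathcal{N}_{S_i}(s_i^*),
\]
together with primal feasibility $h_i(s_i^*, s_{\mathcal{N}_i}^*) \leq \boldsymbol{0}$, dual feasibility $\lambda_i^* \geq \boldsymbol{0}$, and complementarity $\lambda_i^* \odot h_i(s_i^*, s_{\mathcal{N}_i}^*) = \boldsymbol{0}$. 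The denominator-layout convention in the definitions of $\nabla_{s_i} J_i$ and $\nabla_{s_i} h_i$ makes the shape of the stationarity inclusion match \eqref{eq:kkt} exactly. Running the argument for each $i \in \mathcal{N}$ assembles the full KKT system.

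The main obstacle is justifying the constraint qualification needed for the multiplier $\lambda_i^*$ to exist. I would exploit the fact that, after the linearizations performed in Section \ref{subsec:detailed_model}, $h_i(\cdot,s_{\mathcal{N}_i}^*)$ is an affine function of $s_i$ and $S_i$ is a polyhedral convex set, so the combined constraint system is polyhedral. For polyhedral systems the Abadie CQ (indeed, Robinson's regularity) is automatic, and together with the standing assumption that each CAV has a feasible trajectory this guarantees the existence of the required multipliers. Once the CQ is in place, nothing beyond the standard KKT derivation is needed, so the remainder of the argument is mechanical.
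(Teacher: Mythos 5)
Your proposal is correct and follows essentially the same route as the paper, which does not prove this lemma itself but simply invokes \cite[Theorem 4.6]{facchinei2010generalized}: fix each player $i$, observe that $s_i^*$ minimizes $J_i$ over $S_i\cap X_i(s_{\mathcal{N}_i}^*)$, and apply the KKT theorem with an abstract set constraint under a constraint qualification. Your justification of the CQ via polyhedrality of $S_i$ and affineness of $h_i(\cdot,s_{\mathcal{N}_i}^*)$ is a legitimate (and arguably cleaner) substitute for the refined Slater condition the paper establishes separately in Lemma \ref{le:slater}, so no gap remains.
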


Since the constraints in Problem \ref{de:abstract_game} are all linear constraints and the cost function is quadratic with a positive definite Hessian matrix, we conclude the convexity of Problem \ref{de:abstract_game} as stated in Lemma \ref{le:convex}.

\begin{lemma}
    \vspace{\TheoremMargin}
    In Problem \ref{de:abstract_game}, for any $ i \in \mathcal{N}$, $J_{i}(s_{i})$ is a convex function;    for any $ i \in \mathcal{N}$ and  any $  s_{\mathcal{N}_i} $, $S_i$ and $X_{i}(s_{\mathcal{N}_i})$ are closed convex sets.
    \label{le:convex}
    \vspace{\TheoremMargin}
\end{lemma}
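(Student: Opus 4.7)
The plan is to verify the three convexity claims separately, relying on the fact that the problem formulation has already linearized every nonlinear constraint and that the objective is a positive-definite quadratic form. I would first write $J_i(s_i)$ as a sum of three quadratic terms of the form $\tfrac{1}{2}\|L s_i - r\|_W^2$, where $L$ is the linear extraction that pulls the relevant $x_i(k)$ or $u_i(k)$ out of $s_i$, $r$ is the corresponding reference vector (or zero), and $W\in\{Q_i,Q_{i,f},R_i\}$. Since each weight matrix is a positive definite diagonal matrix, the Hessian $\nabla^2 J_i$ is a sum of the positive semidefinite blocks $L^T W L$, hence itself positive semidefinite, which gives convexity of $J_i$.

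Next, I would argue that $S_i$ is a closed convex polyhedron by examining its three constituent constraint groups. The dynamics constraint $f_i(s_i)=\boldsymbol{0}$ is by construction the first-order Taylor expansion $\tilde{f}_i(\bar{s}_i)+J_{\tilde{f}_i}(\bar{s}_i)(s_i-\bar{s}_i)=\boldsymbol{0}$, and is therefore affine. The box constraints on $v_i(k)$, $a_i(k)$, $\delta_i(k)$ are linear inequalities by inspection. The two lane constraints $c_{i,\ell}(s_i(k))\le 0$ and $d_{i,\ell}(s_i(k))\le 0$ are, after the line-linearization step used to derive them in Section~\ref{subsec:detailed_model}, also affine in $s_i(k)$. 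A finite intersection of closed half-spaces and affine hyperplanes is closed and convex, so $S_i$ inherits both properties.

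Finally, I would treat $X_i(s_{\mathcal{N}_i})$ similarly: although the original superellipse function $h_{i,j}$ contains degree-six terms, the constraint that actually appears in Problem~\ref{de:abstract_game} is the linearized form $h_i(s_i,s_{\mathcal{N}_i})$. For any fixed neighbor profile $s_{\mathcal{N}_i}$ this linearization is affine in $s_i$, so $X_i(s_{\mathcal{N}_i})=\{s_i:h_i(s_i,s_{\mathcal{N}_i})\le \boldsymbol{0}\}$ is again a finite intersection of closed half-spaces and hence closed and convex.

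I do not anticipate any real obstacle here: the only subtlety is to be explicit that each ``linearization'' in the modeling section genuinely yields an affine constraint in the decision variable $s_i$, after which all three claims reduce to standard facts about quadratics with PSD Hessians and polyhedral feasible sets. If a reader asked for more, I would spell out the extraction matrix $L$ and note that the definiteness of $Q_i$, $Q_{i,f}$, $R_i$ in fact yields strict convexity of $J_i$, which will also be useful later when invoking uniqueness-type arguments for the VE.
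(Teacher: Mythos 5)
Your proposal is correct and follows essentially the same reasoning as the paper, which justifies the lemma in a single sentence by noting that all constraints in Problem~\ref{de:abstract_game} are linear (after the linearizations performed in the modeling section) and that the cost is quadratic with a positive definite Hessian. Your version simply spells out that one-line argument in full detail, and your added observation that the positive definite diagonal weights in fact give strict convexity of $J_i$ is accurate.
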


From the modeling, it is known that the domain of $s_i$ in Problem \ref{de:abstract_game} is $\mathcal{D}=\mathbb{R}^{(l_x+l_u)(T-1)}$. Moreover, as described in Sec. \ref{subsec:scenario_and_problem}, Problem \ref{de:abstract_game} has feasible solutions, which means that there exists an interior point in $\mathcal{D}$ satisfying all constraints. Since all inequality constraints in Problem \ref{de:abstract_game} are affine, according to (5.27) of \cite{boyd2004convex}, Problem \ref{de:abstract_game} satisfies Lemma \ref{le:slater}.

\begin{lemma}
    \vspace{\TheoremMargin}
    (Refined Slater's constraint qualification): In Problem \ref{de:abstract_game}, for any $  i \in \mathcal{N} $ and any $   s_{\mathcal{N}_i} \in \mathbb{R}^{(l_x+l_u)(T-1)|\mathcal{N}_i|},$ there exists  $  s_i \in \operatorname{int}\mathbb{R}^{(l_x+l_u)(T-1)}$ such that $f_i(s_i)=\boldsymbol{0}, b_i(s_i)\leq\boldsymbol{0}, m_i(s_i)\leq\boldsymbol{0}, h_i(s_i,s_{\mathcal{N}_i})\leq\boldsymbol{0}$.
    \label{le:slater}
    \vspace{\TheoremMarginBottom}
\end{lemma}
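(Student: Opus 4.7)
The plan is to invoke the refined Slater's qualification directly by verifying its hypotheses, which reduce to two ingredients in our setting: affinity of every constraint in $s_i$, and plain feasibility. First I would observe that the ambient domain is all of $\mathcal{D}=\mathbb{R}^{(l_x+l_u)(T-1)}$, so $\operatorname{int}\mathcal{D}=\mathcal{D}$ and the ``interior point'' clause places no additional restriction on $s_i$ beyond the listed constraints.

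Next I would march through the constraint list and certify affinity. The dynamics $f_i(s_i)=\tilde{f}_i(\bar{s}_i)+J_{\tilde{f}_i}(\bar{s}_i)(s_i-\bar{s}_i)$ is affine in $s_i$ by construction of the first-order expansion around $\bar{s}_i$. The box constraints $b_i(s_i)\leq\boldsymbol{0}$ are coordinatewise bounds on velocity, acceleration, and steering, hence affine. The lane constraints $m_i(s_i)\leq\boldsymbol{0}$, which bundle $c_{i,\ell}$ and $d_{i,\ell}$, are linear in the position coordinates of $s_i$ after the linearization described in Sec.~\ref{subsec:detailed_model}. The collision-avoidance constraint $h_i(s_i,s_{\mathcal{N}_i})\leq\boldsymbol{0}$ is, by the linearization of the superellipse constraint about the nominal value, affine in $s_i$ for every fixed $s_{\mathcal{N}_i}$.

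Third, I would apply (5.27) of \cite{boyd2004convex}: for a convex program whose inequality constraints are affine and whose equality constraints are affine, Slater's condition weakens to mere feasibility, so no strict inequality is needed. Combining this with the standing feasibility assumption stated in Sec.~\ref{subsec:scenario_and_problem} (each CAV has a feasible trajectory), together with the trivial interior identity $\operatorname{int}\mathcal{D}=\mathcal{D}$, yields the existence of $s_i$ as claimed.

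The only subtle point I anticipate is extending feasibility from the ``given scenario'' to an \emph{arbitrary} neighbor profile $s_{\mathcal{N}_i}\in\mathbb{R}^{(l_x+l_u)(T-1)|\mathcal{N}_i|}$, since the coupled feasible set $X_i(s_{\mathcal{N}_i})$ depends on $s_{\mathcal{N}_i}$ through the collision-avoidance constraint. I would dispatch this by relying on the scenario convention that the interaction graph $\mathcal{G}$ and the nominal linearization are constructed so that each CAV always has at least one collision-free trajectory available—this is precisely what the feasibility assumption from Sec.~\ref{subsec:scenario_and_problem} guarantees—and by emphasizing that the affinity in $s_i$ (not in $s_{\mathcal{N}_i}$) is what drives Boyd's (5.27), so the argument goes through uniformly in $s_{\mathcal{N}_i}$.
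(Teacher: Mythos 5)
Your proposal is correct and follows essentially the same route as the paper: the paper likewise notes that the domain is all of $\mathbb{R}^{(l_x+l_u)(T-1)}$ (so any feasible point is automatically an interior point of the domain), that every inequality constraint is affine after linearization, and then invokes (5.27) of \cite{boyd2004convex} together with the standing feasibility assumption from Sec.~\ref{subsec:scenario_and_problem}. The subtlety you flag about arbitrary neighbor profiles $s_{\mathcal{N}_i}$ is handled in the paper exactly as you handle it, by appeal to that same feasibility assumption.
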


According to \cite[Chapter 5.2.3]{boyd2004convex}, based on Lemma \ref{le:convex} and \ref{le:slater}, for any given $s_{\mathcal{N}_i}$, strong duality holds in Problem \ref{de:abstract_game}. Under these conditions,  $\lambda_i^*$ satisfying the KKT conditions \eqref{eq:kkt} is the optimal solution to the dual problem. Inspired by \cite[Chapter 5.6.3]{boyd2004convex}, it can be proven that $\lambda_i^*$ is the local sensitivity of $J_i(s_i^*)$ to perturbations $w_i$ in $h_i(s_i,s_{\mathcal{N}_i})$, where $\lambda_i^*=\operatorname{vec}( \allowbreak \lambda_{i,j_1}^*, \allowbreak \dots, \allowbreak \lambda_{i,j_{|\mathcal{N}_i|}}^*), \allowbreak \lambda_{i,j_l}^* = \operatorname{vec}( \allowbreak \lambda_{i,j_l}^*(2), \allowbreak \dots, \allowbreak \lambda_{i,j_l}^*(T) )$, and the same is for $w_i$. $\lambda_{i,j_l}^*(k)$ is the multiplier of $h_{i,j_l}(s_i(k),s_{j_l}(k))$. 

For any given $i\in \mathcal{N}$ and $s_{\mathcal{N}_i} $, the perturbed problem is formulated as
\begin{equation}
    \begin{array}{rl}
        \min\limits_{s_i\in S_i} & J_i(s_i),                   \quad
        \mathrm{s.t.}   ~   h_i(s_i,s_{\mathcal{N}_i})\leq w_i. \\
    \end{array}
    \label{eq:disturb_game}
\end{equation}
Let the optimal value of problem \eqref{eq:disturb_game} be $p(w_i)$. Then it can be proven that $p(w_i)$ and $\lambda_i^*$ satisfy Theorem \ref{th:lambda}.

\begin{theorem} \label{th:lambda}
    Let $\lambda_i^*$ be the multiplier vector satisfying the KKT conditions \eqref{eq:kkt}. Then $\lambda_{i,j}^*=-\frac{\partial p(\boldsymbol{0})}{\partial w_{i,j}}$.
\end{theorem}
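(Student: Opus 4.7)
The plan is to follow the classical convex sensitivity argument of \cite[Chapter 5.6.3]{boyd2004convex}, adapted to the single-player subproblem \eqref{eq:disturb_game}. Fix $i\in\mathcal{N}$ and $s_{\mathcal{N}_i}$. By Lemmas \ref{le:convex} and \ref{le:slater}, the unperturbed problem is a convex program satisfying Slater's condition, so strong duality holds, and the multiplier $\lambda_i^*\geq\boldsymbol{0}$ from \eqref{eq:kkt} is in fact an optimal dual solution that realizes the identity $p(\boldsymbol{0}) = \inf_{s_i\in S_i}\{J_i(s_i) + (\lambda_i^*)^T h_i(s_i,s_{\mathcal{N}_i})\}$. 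This identity is the pivot on which the rest of the argument turns.

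The main technical step is to establish the global lower bound $p(w_i)\geq p(\boldsymbol{0}) - (\lambda_i^*)^T w_i$ on the perturbed value function. To derive it, I would take any $\tilde s_i\in S_i$ feasible for \eqref{eq:disturb_game} (so $h_i(\tilde s_i,s_{\mathcal{N}_i})\leq w_i$) and combine $\lambda_i^*\geq\boldsymbol{0}$ with the dual identity above to obtain
\[
J_i(\tilde s_i)\geq J_i(\tilde s_i) + (\lambda_i^*)^T\bigl(h_i(\tilde s_i,s_{\mathcal{N}_i}) - w_i\bigr)\geq p(\boldsymbol{0}) - (\lambda_i^*)^T w_i,
\]
and then infimize the left side over such $\tilde s_i$. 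From here the theorem is extracted coordinatewise: for any standard basis vector $e$ indexing a single scalar component of $w_i$, the bound specializes to $p(te) - p(\boldsymbol{0})\geq -t(\lambda_i^*)^T e$. Dividing by $t$ and letting $t\to 0^+$ gives the inequality $\partial p(\boldsymbol{0})/\partial(e^T w_i)\geq -(\lambda_i^*)^T e$, while $t\to 0^-$ reverses it. Letting $e$ range over the scalar entries of the $(i,j)$ block then yields the block identity $\partial p(\boldsymbol{0})/\partial w_{i,j} = -\lambda_{i,j}^*$.

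The principal obstacle is the last step: the one-sided quotient estimates a priori only certify that $-\lambda_i^*$ belongs to the subdifferential of the convex value function $p$ at the origin, and turning this into an actual partial derivative requires $p$ to be differentiable at $w_i=\boldsymbol{0}$. I would handle this exactly as in \cite[Chapter 5.6.3]{boyd2004convex}: since \eqref{eq:disturb_game} is a strictly convex QP with affine $h_i$, $p$ is convex and piecewise quadratic in $w_i$ and differentiable at $\boldsymbol{0}$ under a mild non-degeneracy of the active set at $s_i^*$, which is the standing regularity assumption I would adopt to close the proof.
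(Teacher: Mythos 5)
Your proof is correct and follows essentially the same route as the paper's: both instantiate the global sensitivity bound $p(w_i)\geq p(\boldsymbol{0})-(\lambda_i^*)^{T}w_i$ (Boyd's (5.57), which you re-derive from strong duality rather than cite) at single-coordinate perturbations and squeeze the two one-sided difference quotients against $-\lambda_{i,j}^*(k)$. The only substantive difference is that you explicitly flag that concluding $\partial p(\boldsymbol{0})/\partial w_{i,j}=-\lambda_{i,j}^*$ from these one-sided bounds requires differentiability of $p$ at the origin — a point the paper's proof silently assumes when it writes the two-sided limit as the partial derivative — so your version is, if anything, the more complete one.
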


\begin{proof}
    Consider the perturbation $\hat{w}_i$, in which $\hat{w}_{i,j}(k)=\alpha\in \allowbreak \mathbb{R}$ and other components are zero. Then, $\lim_{\alpha\to 0} \allowbreak \frac{p(\hat{w}_i)-p(\boldsymbol{0})}{\alpha}= \allowbreak \frac{\partial p(\boldsymbol{0})}{\partial w_{i,j}(k)}$. According to (5.57) of \cite{boyd2004convex}, we have $p(\boldsymbol{0})\leq p(\hat{w}_i)+\alpha\lambda_{i,j}^{*}(k)$. Using this inequality, we obtain
    \begin{equation}
        \frac{p(\hat{w}_i)-p(\boldsymbol{0})}{\alpha}
        \begin{cases}
            \geq -\lambda_{i,j}^{*}(k), & \alpha > 0, \\
            \leq -\lambda_{i,j}^{*}(k), & \alpha < 0.
        \end{cases}
    \end{equation}
    Taking the limit of the above equation yields $\frac{\partial p(\boldsymbol{0})}{\partial w_{i,j}(k)}= \allowbreak \lim_{\alpha\to 0} \allowbreak \frac{p(\hat{w}_i)-p(\boldsymbol{0})}{\alpha}= \allowbreak -\lambda_{i,j}^{*}(k)$. Hence, $\frac{\partial p(\boldsymbol{0})}{\partial w_{i,j}}= \allowbreak -\operatorname{vec}( \allowbreak \lambda_{i,j}^{*}(2), \allowbreak \dots, \allowbreak \lambda_{i,j}^{*}(T)) = \allowbreak -\lambda_{i,j}^*$, i.e., $\lambda_{i,j}^*=-\frac{\partial p(\boldsymbol{0})}{\partial w_{i,j}}$.
    \vspace{\TheoremMarginBottom}
\end{proof}

It is noticed that $h_i(s_i,s_{\mathcal{N}_i})\leq \boldsymbol{0}$ represents the scarcity of collision-free road space resources. $-J_i(s_i)$ can be interpreted as the payoff of CAV $i$. From Theorem \ref{th:lambda} it is seen that $\lambda_{i,j}^*$ equals the marginal rate of increase in the payoff of CAV $i$ when increasing the collision-free road space resource with CAV $j$, which is called the marginal revenue product in economics. Therefore, $\lambda_{i}^*$ directly reflects the value of safety to vehicle $i$. To obtain fair interaction among CAVs, each pair of mutually yielding CAVs should bear the same rate of payoff decrease to avoid collisions. {\it Hence, an interaction-fair GNE is defined when the following is satisfied:}
\begin{equation}
\lambda_{i,j}^*=\lambda_{j,i}^*,\quad     \forall i \in \mathcal{N}, \forall j \in \mathcal{N}_i.
    \label{eq:equal_lambda}
\end{equation}

Next, we explain that the GNE of Problem \ref{de:abstract_game} satisfying \eqref{eq:equal_lambda} is a VE \cite{facchinei2010generalized}. Define the strategy profile of all CAVs as $s=\operatorname{vec}(s_1,\dots,s_n)$, and the pseudo-gradient as $\mathcal{J}(s)=\operatorname{vec}(\nabla_{s_1}J_1(s_1),\dots,\nabla_{s_n}J_n(s_n))$. The definition of VE is given by Definition \ref{de:variational_equilibrium_}.

\begin{definition}
    \vspace{\TheoremMargin}
    For the GNE Problem \ref{de:abstract_game}, if $\forall i\in \mathcal{N}$, $J_{i}\left(s_{i}\right)$ is a convex function, and $\forall i\in \mathcal{N}$, $\forall s_{\mathcal{N}_i} $, $S_i$ and $X_{i}\left(s_{-i}\right)$ are closed convex sets, and there exists a closed convex set $K$ such that $\forall i \in \mathcal{N}$, $X_{i}\left(s_{-i}\right)=\left \{ s_{i}|\left(s_{i}, s_{-i}\right) \in K\right \}$, then the solution to the following variational inequality (VI) problem \eqref{eq:variational_inequality} is also a solution to Problem \ref{de:abstract_game}: find $s^*\in K$ such that
    \begin{equation}
        \langle \mathcal{J}(s),s-s^*\rangle\geq 0,\forall s\in K.
        \label{eq:variational_inequality}
    \end{equation}
    The solution to the VI \eqref{eq:variational_inequality} is called a variational equilibrium.
    \label{de:variational_equilibrium_}
    \vspace{\TheoremMarginBottom}
    \vspace{\TheoremMarginBottom}
    \vspace{\TheoremMarginBottom}
\end{definition}

By Lemma \ref{le:convex}, $S_i$ is a closed convex set, thus $S=\prod_{i=1}^n S_i$ is also a closed convex set. $h_i$ is a linear constraint, so its epigraph is a closed convex set of $s$, and the intersection of epigraphs $X=\{s|h_i(s_i,s_{\mathcal{N}_i})\leq 0,\forall i\in \mathcal{N}\}$ is a closed convex set of $s$. Hence, the closed convex set $K=S\cap X$ satisfies the condition required by Definition \ref{de:variational_equilibrium_}, proving that Problem \ref{de:abstract_game} satisfies the conditions of Definition \ref{de:variational_equilibrium_}.

In the following, we use the KKT conditions to explain that the interaction-fair GNE is a VE. Let each CAV consider the unrepeated concatenation of $h_i(s_i,s_{\mathcal{N}_i})$ as $h(s)\leq \boldsymbol{0}$. Since $h_{i,j}(s_i(k),s_j(k))=h_{j,i}(s_j(k),s_i(k))$, and the collision avoidance constraints of any other two CAVs do not appear in the decision problem \eqref{eq:abstract_game} of CAV $i$, replacing $h_i(s_i,s_{\mathcal{N}_i})\leq \boldsymbol{0}$ with $h(s)\leq \boldsymbol{0}$ does not affect the solution. For such a problem with globally shared coupled constraints, the KKT conditions of VI have been analyzed in \cite{yi2017a}. In $\nabla_{s_i} h(s^*)$, terms unrelated to $s_i^*$ are zero. Thus, $\nabla_{s_i} h(s^*)\cdot\lambda^*=\sum\limits_{j\in\mathcal{N}_i} \nabla_{s_i} h_{i,j}(s_i^*,s_j^*)\cdot \lambda_{i,j}^*$. Then, the KKT conditions of the VI corresponding to the original problem are as shown in Theorem \ref{th:kkt_vi_origin}.

\begin{theorem}
    \vspace{\TheoremMargin}
    If $s^*$ is the solution to the VI \eqref{eq:variational_inequality}, then $\forall i\in \mathcal{N}$, there exists a Lagrange multiplier vector $\lambda_i^*$, such that $\forall i\in \mathcal{N}, \forall j \in \mathcal{N}_i$,
	\begin{align}
			& \boldsymbol{0}\in \nabla_{s_i}J_i(s_i^*)+ \sum\limits_{j\in\mathcal{N}_i} \nabla_{s_i} h_{i,j}(s_i^*,s_j^*)\cdot \lambda_{i,j}^* +\mathcal{N}_{S_i}(s_i^*), \notag \\
			& h_{i,j}(s_i^*,s_j^*)\leq\boldsymbol{0}, \lambda_{i,j}^*\geq \boldsymbol{0}, \lambda_{i,j}^{*T} h_{i,j}(s_i^*,s_j^*)=\boldsymbol{0}, \lambda_{i,j}^*=\lambda_{j,i}^* . \label{eq:kkt_vi_origin}
	\end{align}
	\label{th:kkt_vi_origin}
    \vspace{\TheoremWithEquationMarginBottom}
\end{theorem}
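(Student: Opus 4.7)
The plan is to apply the standard KKT characterization of a variational inequality over a convex feasible set, after first rewriting the coupled constraints as a single shared system. Because $h_{i,j}(s_i(k),s_j(k))=h_{j,i}(s_j(k),s_i(k))$, the concatenation of the per-player vectors $h_i(s_i,s_{\mathcal{N}_i})$ contains each coupling constraint twice; I would first collapse these duplicates into a single vector $\tilde h(s)$ that contains, for each unordered pair $\{i,j\}$ with $j\in\mathcal{N}_i$ and each $k\in\{2,\dots,T\}$, exactly one copy of $h_{i,j}(s_i(k),s_j(k))$. Then $K=S\cap\{s:\tilde h(s)\leq\boldsymbol{0}\}$ with $S=\prod_{i=1}^{n} S_i$, which, by Lemmas \ref{le:convex} and \ref{le:slater}, is closed and convex and satisfies Slater's condition.

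Under this formulation, the standard KKT theorem for variational inequalities (as in \cite{facchinei2010generalized}) yields a multiplier vector $\mu^*\geq\boldsymbol{0}$ such that $\boldsymbol{0}\in\mathcal{J}(s^*)+\nabla\tilde h(s^*)\mu^*+\mathcal{N}_{S}(s^*)$ together with $\tilde h(s^*)\leq\boldsymbol{0}$ and $\mu^*\odot\tilde h(s^*)=\boldsymbol{0}$. Each entry of $\mu^*$ is indexed by an unordered pair $\{i,j\}$ and a time step $k$; I would then define $\lambda_{i,j}^*(k):=\mu_{\{i,j\}}^*(k)=:\lambda_{j,i}^*(k)$, which gives the symmetry $\lambda_{i,j}^*=\lambda_{j,i}^*$ directly by construction.

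It then remains to decompose the stationarity and complementarity along each player's block. Since $h_{i,j}$ depends only on $s_i$ and $s_j$, the $s_i$-block of $\nabla\tilde h(s^*)\mu^*$ collapses to $\sum_{j\in\mathcal{N}_i}\nabla_{s_i}h_{i,j}(s_i^*,s_j^*)\,\lambda_{i,j}^*$, the $s_i$-block of $\mathcal{N}_{S}(s^*)=\prod_{i}\mathcal{N}_{S_i}(s_i^*)$ is $\mathcal{N}_{S_i}(s_i^*)$, and the scalar feasibility, non-negativity, and complementarity conditions read off directly from the components of $\mu^*\odot\tilde h(s^*)=\boldsymbol{0}$ corresponding to the pair $(i,j)$.

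The main obstacle is conceptual rather than computational: one must keep straight the distinction between the GNE KKT of Lemma \ref{le:kkt_set}, which equips each player with its \emph{own} multiplier for every coupling constraint involving it, and the VI KKT, which introduces a \emph{single} shared multiplier per constraint. Once this duplication is removed through the $\tilde h$ construction and the identification $\lambda_{i,j}^*=\lambda_{j,i}^*=\mu_{\{i,j\}}^*$, the remaining verification is a routine matter of block bookkeeping.
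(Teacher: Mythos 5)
Your proposal is correct and follows essentially the same route as the paper: the paper likewise forms the ``unrepeated concatenation'' $h(s)$ of the coupling constraints (your $\tilde h$), invokes the standard KKT characterization of the VI over $K=S\cap\{s: h(s)\le\boldsymbol{0}\}$ with a single shared multiplier per constraint, and obtains $\lambda_{i,j}^*=\lambda_{j,i}^*$ by identifying both with that shared multiplier, with the stationarity block collapsing to $\sum_{j\in\mathcal{N}_i}\nabla_{s_i}h_{i,j}(s_i^*,s_j^*)\lambda_{i,j}^*$ exactly as you describe. The only cosmetic difference is the reference used for the VI KKT result (the paper cites Yi--Pavel rather than Facchinei--Kanzow), and your write-up is if anything slightly more explicit about Slater's condition and the product structure of $\mathcal{N}_S(s^*)$.
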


Next, we explain the relationship between the KKT conditions \eqref{eq:kkt} of the GNE and the KKT conditions \eqref{eq:kkt_vi_origin} of the VI \eqref{eq:variational_inequality}. Theorem 4.8 of \cite{facchinei2010generalized} analyzed the case of globally shared coupled constraints. But here the coupled constraints $h_{i,j}(s_i,s_j)$ only exist in the problems of CAV $i$ and $j$, and only CAV $i$ and $j\in\mathcal{N}_i$ have $\lambda_{i,j},\lambda_{j,i}$. Therefore, the condition $\lambda_1=\dots=\lambda_n$ in Theorem 4.8 of \cite{facchinei2010generalized} is equivalent to $\lambda_{i,j}=\lambda_{j,i}, j\in\mathcal{N}_i$ in the context of this work. According to Definition \ref{de:variational_equilibrium_} and the following Theorem \ref{th:kkt_relationship}, we conclude that a GNE of Problem \ref{de:abstract_game} satisfying \eqref{eq:equal_lambda} is a VE.

\begin{theorem}
    \vspace{\TheoremMargin}
    For the GNE Problem \ref{de:abstract_game}, the relationship between the KKT conditions \eqref{eq:kkt} of the GNE and the KKT conditions \eqref{eq:kkt_vi_origin} of the VI \eqref{eq:variational_inequality} is as follows:

    \begin{enumerate}[wide]
        \item If $s^*$ is a solution to the VI \eqref{eq:variational_inequality} such that the KKT conditions \eqref{eq:kkt_vi_origin} hold for $\lambda_{1}^*,\dots,\lambda_{n}^*$. Then $s^*$ is a GNE, and the KKT conditions of the GNE \eqref{eq:kkt} are satisfied with $\lambda_{i,j}=\lambda_{j,i}=\lambda_{i,j}^*=\lambda_{j,i}^*,\forall i\in \mathcal{N}, \forall j\in\mathcal{N}_i$.

        \item If $s^*$ is a GNE, such that the KKT conditions of the GNE \eqref{eq:kkt} are satisfied with $\lambda_{i,j}^*=\lambda_{j,i}^*,\forall i\in \mathcal{N}, \forall j\in\mathcal{N}_i$. Then $(s^*, \lambda_{1}^*,\dots,\lambda_{n}^*)$ is a KKT point of the VI \eqref{eq:variational_inequality} and $s^*$ is a solution to \eqref{eq:variational_inequality}.
    \end{enumerate}
    \label{th:kkt_relationship}
    \vspace{\TheoremMarginBottom}
\end{theorem}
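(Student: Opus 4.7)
The plan is to exploit the fact that the two systems of KKT conditions share the same stationarity, feasibility, and complementarity blocks once the symmetry $\lambda_{i,j}^*=\lambda_{j,i}^*$ is imposed, and then invoke convex sufficiency (guaranteed by Lemma \ref{le:convex} and Lemma \ref{le:slater}) in each direction. The structural identity driving both implications is
$$\nabla_{s_i} h_i(s_i^*,s_{\mathcal{N}_i}^*)\cdot \lambda_i^* \;=\; \sum_{j\in\mathcal{N}_i}\nabla_{s_i} h_{i,j}(s_i^*,s_j^*)\cdot \lambda_{i,j}^*,$$
which holds because $h_i$ is, by construction, the concatenation of the blocks $h_{i,j}$ over $j\in\mathcal{N}_i$.

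For statement (1), I would start from a VI solution $s^*$ satisfying \eqref{eq:kkt_vi_origin} with multipliers $\lambda_{i,j}^*$ (which are already symmetric). Defining the GNE multiplier block for player $i$ as $\lambda_i^*:=\operatorname{vec}(\lambda_{i,j}^*)_{j\in\mathcal{N}_i}$, the identity above turns the stationarity line of \eqref{eq:kkt_vi_origin} into the stationarity inclusion of \eqref{eq:kkt}. The per-pair feasibility, nonnegativity, and complementarity in \eqref{eq:kkt_vi_origin} reassemble componentwise into $h_i(s_i^*,s_{\mathcal{N}_i}^*)\leq\boldsymbol{0}$, $\lambda_i^*\geq\boldsymbol{0}$, and $\lambda_i^*\odot h_i(s_i^*,s_{\mathcal{N}_i}^*)=\boldsymbol{0}$. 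Thus \eqref{eq:kkt} is satisfied for each $i$. Finally, for the parameter $s_{\mathcal{N}_i}^*$ held fixed, player $i$'s problem \eqref{eq:abstract_game} is convex (Lemma \ref{le:convex}) and satisfies Slater's condition (Lemma \ref{le:slater}), so its KKT conditions are sufficient for optimality, and $s_i^*$ solves \eqref{eq:abstract_game}; doing this for every $i$ shows $s^*$ is a GNE.

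For statement (2), I would run the argument backwards. Given a GNE $s^*$ whose multipliers satisfy \eqref{eq:kkt} together with the symmetry $\lambda_{i,j}^*=\lambda_{j,i}^*$, the same identity converts the stationarity inclusion of \eqref{eq:kkt} into the stationarity line of \eqref{eq:kkt_vi_origin}, while the imposed symmetry supplies exactly the last equality in \eqref{eq:kkt_vi_origin}; the remaining feasibility and complementarity conditions split per pair from those already in \eqref{eq:kkt}. Hence $(s^*,\lambda_1^*,\dots,\lambda_n^*)$ is a KKT point of the VI \eqref{eq:variational_inequality}. To conclude that $s^*$ actually solves the VI, I would observe that $K=S\cap X$ is closed and convex (as established right after Definition \ref{de:variational_equilibrium_}) and Slater's condition holds, so that the KKT system \eqref{eq:kkt_vi_origin} is equivalent to the normal-cone inclusion $-\mathcal{J}(s^*)\in \mathcal{N}_{K}(s^*)$, which is in turn equivalent to the VI \eqref{eq:variational_inequality} by convexity of $K$.

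The main technical obstacle, beyond careful bookkeeping of the index sets, is pinpointing precisely which piece of \eqref{eq:kkt} is \emph{not} automatically supplied by \eqref{eq:kkt_vi_origin}: it is only the multiplier symmetry $\lambda_{i,j}^*=\lambda_{j,i}^*$, because each coupled constraint $h_{i,j}=h_{j,i}$ is written once in the VI but duplicated in the two players' local problems of the GNE. Once this asymmetry of counting is reconciled through the identity above, both directions reduce to a relabelling of multipliers backed by convex KKT sufficiency.
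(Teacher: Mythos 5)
Your proof is correct, but it is routed differently from the paper's. The paper does not argue Theorem~\ref{th:kkt_relationship} from scratch: it first replaces each player's local constraint $h_i(s_i,s_{\mathcal{N}_i})\leq\boldsymbol{0}$ by the unrepeated global concatenation $h(s)\leq\boldsymbol{0}$ (justified by $h_{i,j}=h_{j,i}$ and the absence of third-party constraints in player $i$'s problem), and then invokes Theorem~4.8 of \cite{facchinei2010generalized} for globally shared constraints, observing only that the condition $\lambda_1=\dots=\lambda_n$ there degenerates to the pairwise condition $\lambda_{i,j}=\lambda_{j,i}$ because of the sparsity of the coupling. You instead re-derive the content of that cited theorem directly: the relabelling identity $\nabla_{s_i}h_i\cdot\lambda_i^*=\sum_{j\in\mathcal{N}_i}\nabla_{s_i}h_{i,j}\cdot\lambda_{i,j}^*$ maps the two stationarity conditions onto each other, the feasibility and complementarity blocks split or reassemble componentwise, and convex sufficiency of KKT (Lemma~\ref{le:convex}, with Lemma~\ref{le:slater} needed only for the necessity directions) closes each implication. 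Your version is self-contained and makes explicit the one genuinely non-automatic ingredient --- the multiplier symmetry forced by the duplication of each pairwise constraint across two local problems --- whereas the paper's version is shorter but leaves the verification that the cited hypothesis transfers to the sparse setting implicit. One small point to tighten in your direction (2): the normal-cone inclusion $-\mathcal{J}(s^*)\in\mathcal{N}_K(s^*)$ is equivalent by convexity of $K$ to the Stampacchia form $\langle\mathcal{J}(s^*),s-s^*\rangle\geq 0$, while \eqref{eq:variational_inequality} as written in Definition~\ref{de:variational_equilibrium_} is the Minty form $\langle\mathcal{J}(s),s-s^*\rangle\geq 0$; passing between the two additionally uses that $\mathcal{J}$ is continuous and monotone, which holds here since each block $\nabla_{s_i}J_i$ is the gradient of a convex quadratic, but deserves a sentence.
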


\section{Algorithm}
\label{sec:algorithm}
We propose a semi-decentralized and VE-based planner  as shown in Algorithm \ref{al:framework}, where the superscript $k$ denotes the iteration number and the subscript $1:n$ indicates CAVs $1$ to $n$. The core idea  lies in that each CAV $i\in \mathcal{N}$ and the RSU alternately optimize $s_i$ and $\lambda_i$ to meet the KKT conditions. The core steps of Algorithm \ref{al:framework} are as follows.

\begin{enumerate}[wide]    
    \item \textsc{Minimize\_Augmented\_Lagrangian\_Function} in line \ref{all:minimize_augmented_lagrangian_function}.
    
    (a) Definition of the augmented Lagrangian function. For problem \eqref{eq:abstract_game} of CAV $i$, we use a slack variable $\gamma_{h_i}$ to get a equality constraint: $h_i(s_i,s_{\mathcal{N}_i})+\gamma_{h_i}=\boldsymbol{0}$. Then the augmented Lagrangian function is defined as $L_{i}(s_i,s_{\mathcal{N}_i},\gamma_{h_i},\lambda_i)=J_i(s_i)+\lambda_i^{T}(h_i(s_i,s_{\mathcal{N}_i})+\gamma_{h_i})+\frac{1}{2}\|h_i(s_i,s_{\mathcal{N}_i})+\gamma_{h_i}\|^2_{D_{h_i}}$, 
          where $D_{h_i}$ is a diagonal positive definite penalty matrix.

          (b) The elimination of $\gamma_{h_i}$. When minimizing $L_i$, 
          by fixing $s_i$, we obtain $\min\limits_{\gamma_{h_i}\geq \boldsymbol{0}}\lambda_i^T\gamma_{h_i}+\frac{1}{2}\|h_i(s_i,s_{\mathcal{N}_i})+\gamma_{h_i}\|^2_{D_{h_i}}$. From the optimality conditions of the convex optimization problem with non-negative constraints, we obtain the optimal solution
          \begin{equation}
              \gamma_{h_i}(s_i)=\max\{-D_{h_i}^{-1}\lambda_i-h_i(s_i,s_{\mathcal{N}_i}),\boldsymbol{0}\}.
              \label{eq:optimal_gamma}
          \end{equation}

          (c) Processing and solving the optimization problem. In the $k$-th iteration of the augmented Lagrangian method, given $s_{\mathcal{N}_i}^{k}$, CAV $i$ solves the following problem
          \begin{equation}
              \min\limits_{s_i\in S_i}     L_{i}(s_i,s_{\mathcal{N}_i}^k, \gamma_{h_i}(s_i),\lambda_i^k),
              \label{eq:lagrange_k_problem}
          \end{equation}
          Since \eqref{eq:optimal_gamma} includes $\max$, \eqref{eq:lagrange_k_problem} is a non-smooth function. To convert it to a smooth function, a decision variable $w_{h_i}$ is introduced to replace the term $h_i(s_i,s_{\mathcal{N}_i})+\gamma_{h_i}$, while constraints $-w_{h_i}+h_i(s_i,s_{\mathcal{N}_i})\leq \boldsymbol{0}, -w_{h_i}-D_{h_i}^{-1}\lambda_i\leq \boldsymbol{0}$ are imposed. This approach does not change the optimal solution. After processing, in the $k$-th iteration, the augmented Lagrangian function becomes $L_{i}^k(s_i,w_{h_i},\lambda_i^k)= J_i(s_i)+\lambda_i^{k\ T}w_{h_i}+\frac{1}{2}w_{h_i}^{T} D_{h_i}^k w_{h_i}$, and CAV $i$ solves the following quadratic programming problem
          \begin{equation}
              \begin{array}{rl}
                  \min\limits_{s_i\in S_i, w_{h_i}} & L_{i}^k(s_i,w_{h_i},\lambda_i^k),                         \\
                  \mathrm{s.t.}              & -w_{h_i}+h_i(s_i,s_{\mathcal{N}_i}^k)\leq \boldsymbol{0}, \\
                                             & -w_{h_i}-(D_{h_i}^k)^{-1}\lambda_i^k\leq \boldsymbol{0}.
              \end{array}
              \label{eq:k_iteration_problem}
          \end{equation}

    \item \textsc{Is\_Convergent} in line \ref{all:solution_is_convergent}.
    
    Constraint violation is defined as $\mathpzc{v}( s^{k+1}, \lambda_{1:n}^k, D_{1:n}^k ) = \|\operatorname{vec}( \max \{ h_i(s_i^{k+1}, s_{\mathcal{N}_i}^{k+1} ), -(D_{h_i}^k)^{-1} \lambda_i^k \} )\|_2, \forall i\in \mathcal{N}$. When the constraint violation is below the threshold $\epsilon$, the solution $s^{k+1}$ converges and the iteration is terminated.

\vspace{-0.3cm}
\begin{algorithm}
    \caption{Semi-decentralized and Variational-Equilibrium-Based Trajectory Planner (SVEP)}
    \begin{algorithmic}[1]
        \Require Initial strategy profile $s^0$, penalty update factor $\rho$, maximum iteration times $k_{\max}$
        \Ensure Strategy profile $s$
        \State RSU: $\mathcal{N}_1, \dots, \mathcal{N}_n \leftarrow$ \Call{Determine\_Interaction\_ Relationship}{$s^0$}; \label{all:determine_interaction_relationship}
        \State CAV $i\in \mathcal{N}$: $\lambda_i^0, D_{h_i}^0 \leftarrow$\Call{Initialize\_Parameter}{$\mathcal{N}_i$} \textup{in parallel}; \label{all:initialize_parameter}
        \For{$k = 0$ \textbf{to} $k_{\max}$} \label{all:augmented_lagrange_for_begin}
            \State CAV $i\in \mathcal{N}$: $s_i^{k+1} \leftarrow$\Call{Minimize\_Augmented\_ Lagrangian\_Function}{$s_{\mathcal{N}_i}^k, \lambda_i^k, D_{h_i}^k$} \textup{in parallel}; \label{all:minimize_augmented_lagrangian_function}
            \If{\Call{Is\_Convergent}{$s^{k+1}, \lambda_{1:n}^k, D_{h_{1:n}}^k$}}  \label{all:solution_is_convergent} 
                \State \textbf{break};
            \EndIf \label{all:solution_is_convergent_end}
            \State RSU: $\lambda_1^{k+1}, \dots, \lambda_n^{k+1}\leftarrow$ \textup{\textsc{Update\_Multiplier}} $(s^{k+1}, \lambda_{1:n}^k, D_{h_{1:n}}^k)$; \label{all:update_multiplier} 
            \State CAV $i\in \mathcal{N}$: $D_{h_i}^{k+1}\leftarrow$\Call{Update\_Penalty\_Mat-rix}{$\rho, D_{h_i}^k$} \textup{in parallel}; \label{all:update_penalty_matrix}
        \EndFor \label{all:augmented_lagrange_for_end}
        \State \Return $s^{k+1}$;
    \end{algorithmic}
    \label{al:framework}
\end{algorithm}
\vspace{-0.3cm}

    \item \textsc{Update\_Multiplier} in line \ref{all:update_multiplier}.
    
    For problem \eqref{eq:lagrange_k_problem} in the $k$-th iteration, the optimal solution $s^{k+1}_i$ satisfies
          \begin{equation}
              \begin{aligned}
                  \boldsymbol{0}= & \nabla_{s_i}J_i(s_i^{k+1})+\nabla_{s_i}h_i(s_i^{k+1},s_{\mathcal{N}_i}^k)\times                          \\
                                  & \max\{\lambda_i^k+D_{h_i}^kh_i(s_i^{k+1},s_{\mathcal{N}_i}^k),\boldsymbol{0}\}+\mathcal{N}_{S_i}(s_i^*).
              \end{aligned}
              \label{eq:k_iteration_optimal}
          \end{equation}
          Comparing \eqref{eq:k_iteration_optimal} with the stability condition in the KKT conditions of the slack problem
          \begin{equation}
              \boldsymbol{0}\in \nabla_{s_i}J_i(s_i^*)+ \nabla_{s_i} h_i(s_i^*,s_{\mathcal{N}_i}^{*})\cdot\lambda_i^*+\mathcal{N}_{S_i}(s_i^*),
          \end{equation}
          we obtain the multiplier update formula $\lambda_i^{k+1,md}=\max\{\lambda_i^k+D_{h_i}^kh_i(s_i^{k+1},s_{\mathcal{N}_i}^k),\boldsymbol{0}\}$. To get a VE, \eqref{eq:equal_lambda} must be satisfied. Thus, the multiplier consensus step is designed as
          \begin{equation}
              \lambda_{i,j}^{k+1}=\frac{1}{2}(\lambda_{i,j}^{k+1,md}+\lambda_{j,i}^{k+1,md}),\forall i\in \mathcal{N},\forall j\in\mathcal{N}_i.
          \end{equation}
        Therefore, the multipliers satisfy $\lambda_{i,j}^{k+1}=\lambda_{j,i}^{k+1}$ after each iteration.
    
    \item \textsc{Update\_Penalty\_Matrix} in line \ref{all:update_penalty_matrix}.
    
    At the end of each iteration, we set $D_{h_i}^{k+1}=\rho D_{h_i}^k$.
\end{enumerate}

\section{Experimental Results}
\label{sec:experimental_results}
\subsection{Experimental Setting}
\label{subsec:experimental_setting}
This section conducts experiments in the environment of a two-way, two-lane intersection, as shown in Fig. \ref{fig:experimental_scenario}. 

\begin{figure}[htbp]
    \vspace{\FigureMargin}
    \centerline{\includegraphics[width=0.88\linewidth, keepaspectratio]{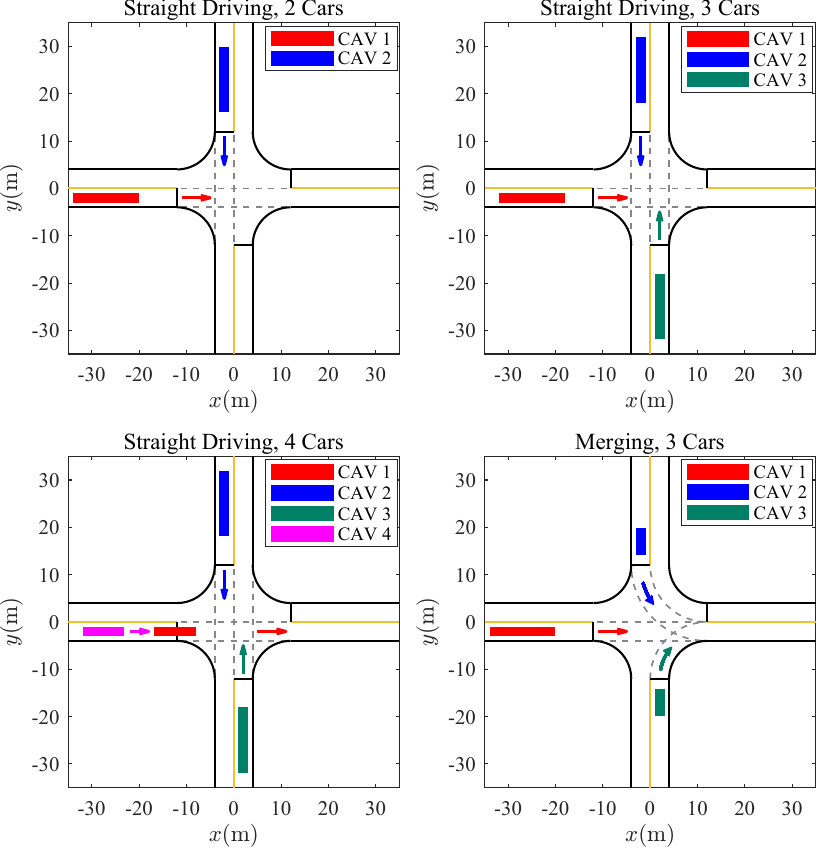}}
    \caption{The experimental scenario and the situation setup. The solid rectangles marked in the legend represent the area covered by the random initial positions of each CAV.}
    \vspace{\FigureMargin}
    \label{fig:experimental_scenario}
\end{figure}

We compare the proposed Algorithm 1 (SVEP) with ALGAMES \cite{le2022algames}. Both of them use the same $T$, $T_s$, and $\epsilon$, while other parameters of ALGAMES follow the default settings in its open-source implementation. The parameters of SVEP are presented in Table \ref{tab:simulation_parameter}. We set $\epsilon=\num{1e-1}$ for ALGAMES in the merging scenario, since otherwise, its other metrics in the merging scenario cannot be analyzed. The program runs on a desktop computer equipped with an Intel Core i5-10400F CPU, 16GB of RAM, and the Windows 11 operating system.

\begin{table}[htbp]
    \vspace{\TableMargin}
    \caption{Algorithm parameters.}
    \vspace{\TableToCaption}
    \begin{center}
        \begin{tabular}{cccc}
            \toprule
            \textbf{Parameter} & \textbf{Value} & \textbf{Parameter} & \textbf{Value} \\
            \midrule
            \shortstack{Maximum\\number of\\iterations $k_{\max}$} & 40 & \shortstack{Discrete\\prediction\\horizon $T$} & $\num{20}$ \\
            \cmidrule[\cmidruleWidth]{1-2} \cmidrule[\cmidruleWidth]{3-4}
            \shortstack{Penalty update\\factor $\rho$} & $\num{4}$ & \shortstack{Discrete\\period $T_s$} & $\qty{0.1}{s}$ \\
            \cmidrule[\cmidruleWidth]{1-2} \cmidrule[\cmidruleWidth]{3-4}
            \shortstack{Constraint violation\\threshold $\epsilon$} & $\num{1e-3}$ & \shortstack{Precision\\threshold $\eta$} & $\num{1e-8}$ \\
            \bottomrule
        \end{tabular}
        \label{tab:simulation_parameter}
    \end{center}
    \vspace{\TableMargin}
\end{table}

\subsection{Evaluation and Analysis}
We designed four situations for the Monte Carlo experiments, as shown in Fig. \ref{fig:experimental_scenario}. The initial speeds of the CAVs are uniformly distributed in $[5, 15]\unit{m/s}$. For $\forall i\in \mathcal{N}$, the value of $D_{h_i}^0$ is set to $D_{h_i}^0 = diag(d_{h_i},\dots, d_{h_i})_{ |\mathcal{N}_i|}$, where $d_{h_i}\sim U[0.5, 1.5]$. Next, we will evaluate the algorithm's performance in terms of computational efficiency and safety.

\begin{enumerate}[wide]
    \item {\bf Computational Efficiency.} We consider the time cost of SVEP, which consists of the computation time per vehicle and the RSU computation time.
          
          (a) Computation time per vehicle: i.e., the average computation time per trajectory planning for each vehicle. The results of 500 Monte Carlo experiments are shown in Fig. \ref{fig:time_comparison}, where  SVEP and ALGAMES use the left and right vertical axes, respectively, to visualize the data within their appropriate ranges.
          It can be seen that SVEP has shorter computation time per vehicle. As the number of straight-driving vehicles increases from 2 to 4, the median of the average computation time of SVEP and ALGAMES increases by $\qty{38.4}{\percent}$ and $\qty{290.8}{\percent}$, respectively. It is evident that SVEP is scalable.

          \begin{figure}[htbp]
              \vspace{\FigureMargin}
              \centerline{\includegraphics[width=0.88\linewidth, keepaspectratio]{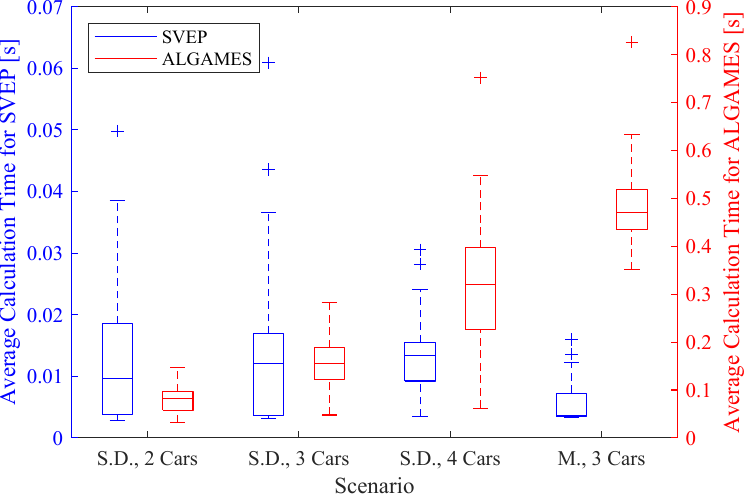}}
              \caption{The average computation time of CAV $1$ in straight-driving (S.D.) and merging (M.) scenarios using SVEP and ALGAMES.}
              \vspace{\FigureMarginBottom}
              \label{fig:time_comparison}
          \end{figure}
          
          (b) The RSU computation time: The mean and standard deviation of the average computation time of the RSU in each trajectory planning are recorded in Table \ref{tab:rsu_time}. It is seen that the average computation time of the RSU is not the main part of SVEP's running time.
            \begin{table}[htbp]
              \vspace{\TableMargin}
              \caption{The average computation time of the RSU.}
              \vspace{\TableToCaption}
              \begin{center}
                  \begin{tabular}{cc}
                      \toprule
                      Scenario & Average Calculation Time [$\unit{s}$] \\
                      \midrule
                      Straight driving, 2 cars & $\num{6.53e-04}\pm\num{4.45e-04}$ \\
                      Straight driving, 3 cars & $\num{1.14e-03}\pm\num{7.10e-04}$ \\
                      Straight driving, 4 cars & $\num{1.96e-03}\pm\num{7.32e-04}$ \\
                      Merging, 3 cars          & $\num{5.63e-04}\pm\num{2.94e-04}$ \\
                      \bottomrule
                  \end{tabular}
                  \label{tab:rsu_time}
              \end{center}
              \vspace{\TableMargin}
          \end{table}
          
          In summary, SVEP's running time in all four experimental situations is below $\qty{1e-1}{s}$, which meets the requirements for real-time planning.

    \item {\bf Safety.}
          The success rate is defined as the proportion of experiments in which all CAVs reach their designated positions without violating any constraints. 500 Monte Carlo experiments were conducted with the same settings, and the success rates are shown in Table \ref{tab:success_rate}. 
          It is seen from Table \ref{tab:success_rate} that SVEP can maintain high safety in every situation without violating constraints. In particular, the safety advantages of SVEP are significant in the merging scenario. Taking the average initial state of the merging scenario as an example, the driving trajectories and front wheel steering angles are shown in Fig. \ref{fig:trajectory_i1s1l1rm}, which indicates that SVEP avoids unnecessary back-and-forth steering, reducing the risk of exceeding the lane boundaries.

          \begin{table}[htbp]
              \vspace{\TableMargin}
              \caption{Success rate.}
              \vspace{\TableToCaption}
              \begin{center}
                  \begin{tabular}{ccc}
                      \toprule
                      \multirow{2}{*}[-0.5\dimexpr \aboverulesep + \belowrulesep + \cmidrulewidth]{Scenario} & \multicolumn{2}{c}{Algorithm} \\ 
                      \cmidrule{2-3}
                      & SVEP & ALGAMES \\ 
                      \midrule
                      Straight driving, 2 cars & $\qty{100}{\percent}$& $\qty{98.8}{\percent}$ \\
                      Straight driving, 3 cars & $\qty{100}{\percent}$& $\qty{97.6}{\percent}$ \\
                      Straight driving, 4 cars & $\qty{100}{\percent}$& $\qty{97.0}{\percent}$ \\
                      Merging, 3 cars & $\qty{100}{\percent}$ & \makecell{$\qty{72.8}{\percent}(\epsilon=\num{1e-1})$,\\$\qty{0}{\percent} (\epsilon=\num{1e-3})$} \\
                      \bottomrule
                  \end{tabular}
                  \label{tab:success_rate}
              \end{center}
              \vspace{\TableMargin}
          \end{table}
          
          \begin{figure}[htbp]
              \vspace{\SubFigureMargin}
              \centering
              \subfloat[]{
                  \includegraphics[width = 0.4\linewidth]{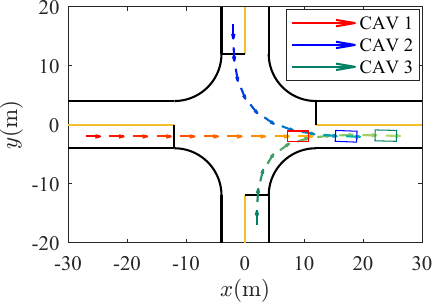}
                  \label{fig:svep_trajectory_i1s1l1rm}
              }
              \hfill
              \subfloat[]{
                  \includegraphics[width = 0.4\linewidth]{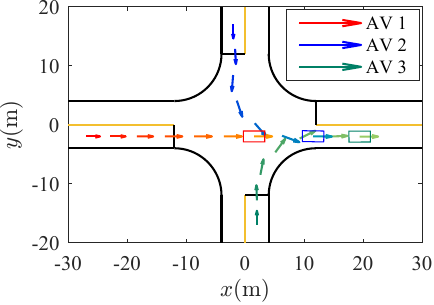}
                  \label{fig:algames_trajectory_i1s1l1rm}
              }
              \vspace{-0.3cm}
              \newline
              \subfloat[]{
                  \includegraphics[width = 0.4\linewidth]{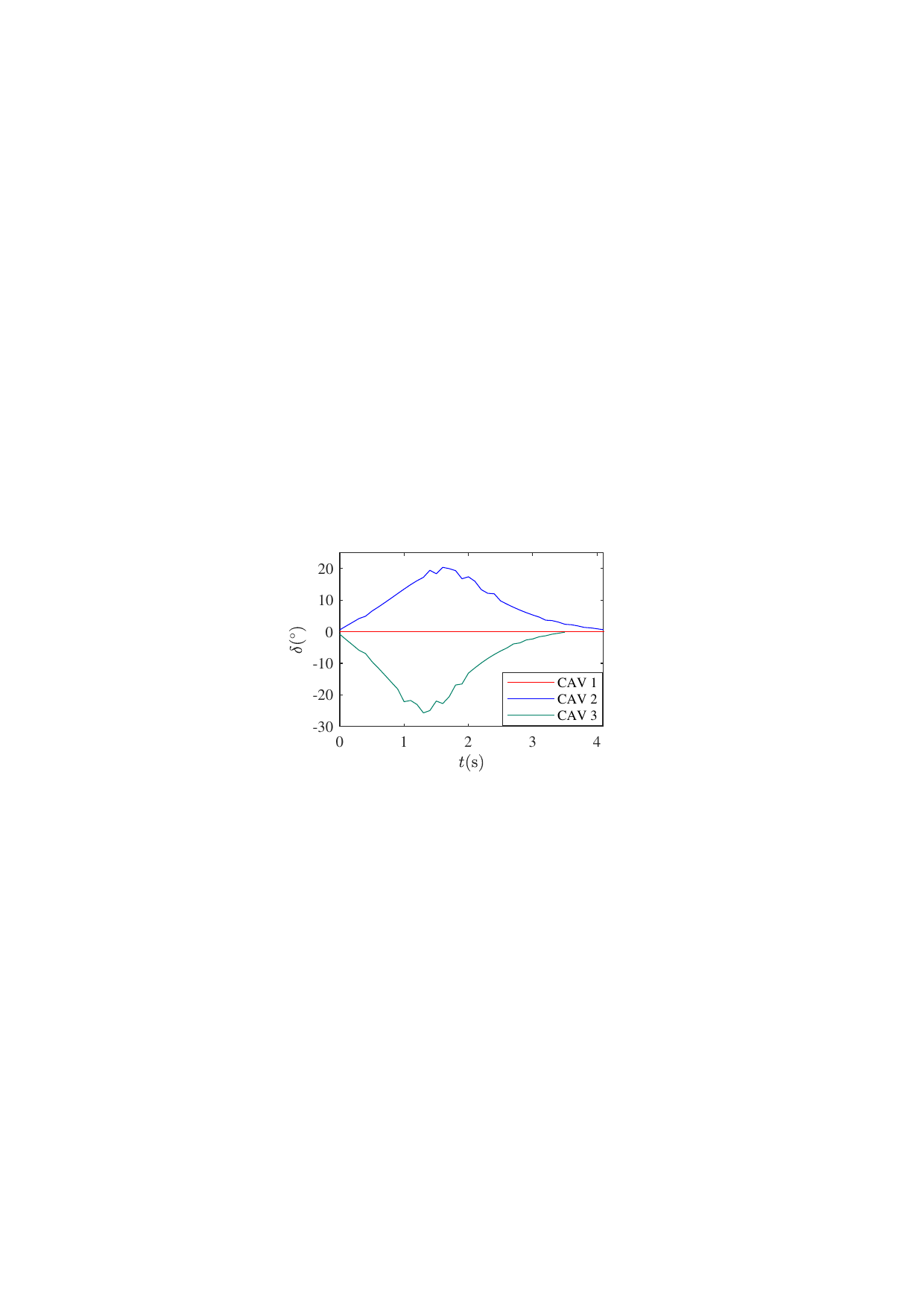}
                  \label{fig:svep_delta_i1s1l1rm}
              }
              \hfill
              \subfloat[]{
                  \includegraphics[width = 0.4\linewidth]{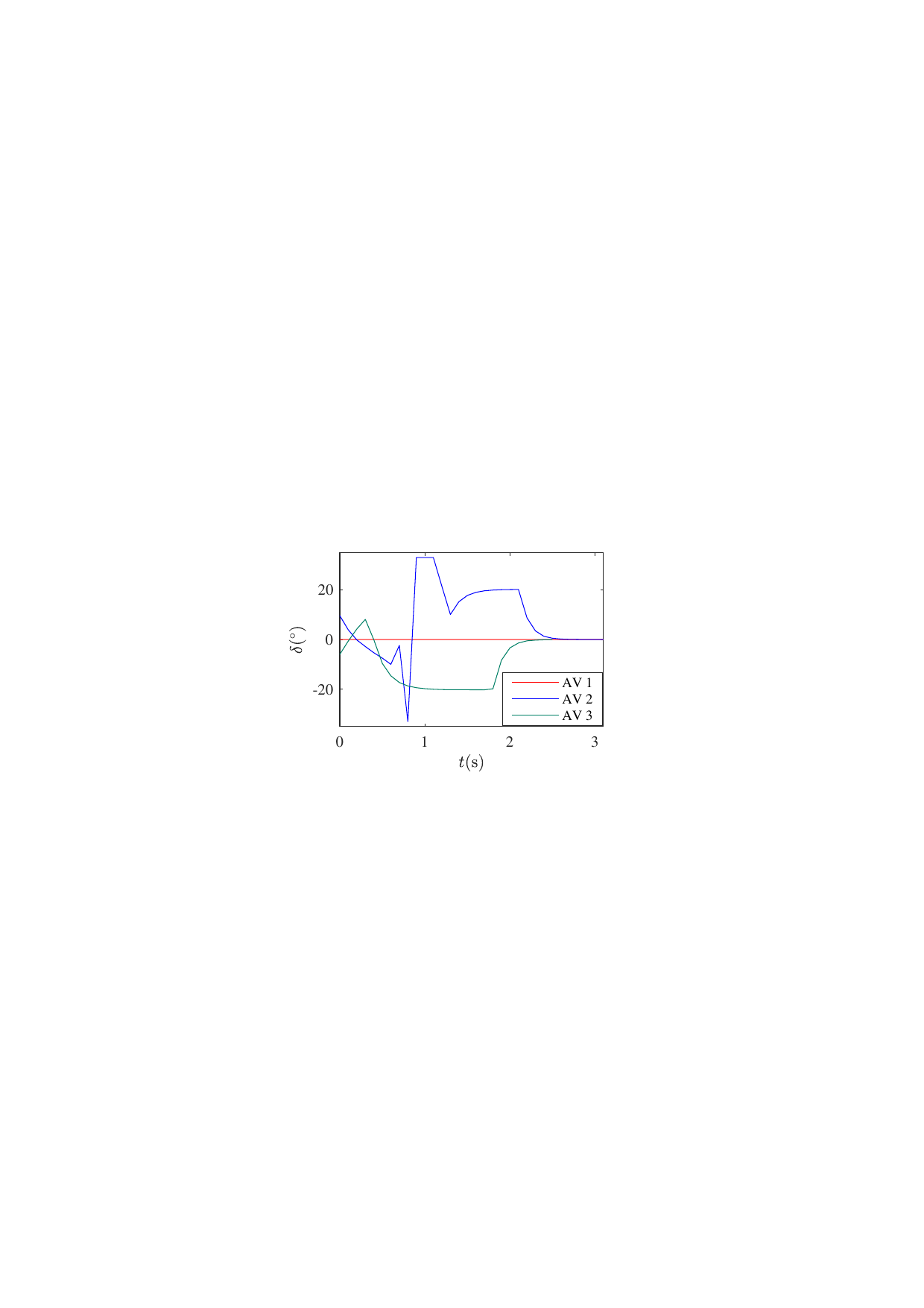}
                  \label{fig:algames_delta_i1s1l1rm}
              }
              \vspace{-0.2cm}
              \caption{Planning results using SVEP and ALGAMES. The arrow sequence shows the position and yaw angle of the vehicles every $\qty{0.4}{s}$, with its length proportional to the velocity. (a) SVEP's trajectory. (b) ALGAMES's trajectory. (c) SVEP's front wheel steering angles. (d) ALGAMES's front wheel steering angles.}
              \vspace{\FigureMarginBottom}
              \label{fig:trajectory_i1s1l1rm}
          \end{figure}

          Another important factor affecting safety is whether the equilibrium solved by each vehicle is the same, that is, whether each vehicle's predicted decisions of other vehicles match the actual decisions. We set the criterion for equilibrium concordance as the distance between predicted and actual $u_i(1)$ being less than $\num{1e-1}$, and define the average ratio of the number of times the planning between any pair of vehicles reaches equilibrium concordance in each experiment as the equilibrium concordance rate. The results are shown in Table \ref{tab:equilibrium_concordance_rate}, where for ALGAMES, we separately calculate the equilibrium concordance rate for successful cases (without car collision) and failed cases (with car collision).
          Through sharing $s_i$ and coordinating $\lambda_{i}$ by our designed Algorithm \ref{al:framework} (SVEP),  CAVs can obtain a consistent VE. In contrast, the uncoordinated ALGAMES may lead each vehicle to converge to different equilibrium. In the same scenario, the equilibrium concordance rate is higher in successful cases than in failed cases, which means that the concordant equilibrium can promote the safety of trajectories. Therefore, the V2X-based equilibrium coordination approach  in our Algorithm \ref{al:framework} can indeed  positively contribute to safety.
          
          It should be noted that although SVEP has performance advantages in experiments, these advantages come at the cost of relying on network communication.

          \begin{table}[htbp]
              \caption{The equilibrium concordance rate.}
              \vspace{\TableToCaption}
              \begin{center}
                  \begin{threeparttable}
                      \begin{tabular}{ccc}
                          \toprule
                          \multirow{2}{*}[-0.5\dimexpr \aboverulesep + \belowrulesep + \cmidrulewidth]{Scenario} & \multicolumn{2}{c}{Algorithm} \\ 
                          \cmidrule{2-3}
                          & SVEP & ALGAMES\tnote{1} \\ 
                          \midrule
                          Straight driving, 2 cars & $\qty{100}{\percent}$& $\qty{98.7}{\percent}$(S),$\qty{90.1}{\percent}$(F)\\
                          Straight driving, 3 cars & $\qty{100}{\percent}$& $\qty{99.3}{\percent}$(S),$\qty{95.7}{\percent}$(F)\\
                          Straight driving, 4 cars & $\qty{100}{\percent}$& $\qty{99.1}{\percent}$(S),$\qty{95.7}{\percent}$(F)\\
                          Merging, 3 cars   & $\qty{100}{\percent}$& $\qty{90.0}{\percent}$(S),$\qty{71.9}{\percent}$(F) \\
                          \bottomrule
                      \end{tabular}
                      \begin{tablenotes}
                          \item[1] Data is measured in successful (S) and failed (F) cases, respectively.
                      \end{tablenotes}
                  \end{threeparttable}
                  \label{tab:equilibrium_concordance_rate}
              \end{center}
              \vspace{\TableWithFootnoteMargin}
          \end{table}

          \label{it:last_experiment}
\end{enumerate}

\section{Conclusion}
\label{sec:conclusion}
We propose a semi-decentralized and VE-based CAV trajectory planner to resolve the computational efficiency and safety problems of previous uncoordinated game theoretic planners. By decomposing the computation process for each vehicle, the proposed planner can eliminate redundant computations to enhance efficiency. In addition, the multiplier consensus mechanism, with the help of the RSU, enables convergence to concordant equilibrium, thus enhancing safety and obtaining interaction-fair trajectories. Experimental results show that the proposed method meets the real-time implementation requirements, has favorable scalability, and ensures safety.


\bibliographystyle{IEEEtran}
\bibliography{reference}

\end{document}